\newtheorem{runexample}{Running example}
\providecommand\given{}
\newcommand\SetSymbol{\nonscript\colon\allowbreak\nonscript\mathopen{}}
\DeclarePairedDelimiter{\group}{(}{)}
\DeclarePairedDelimiter\bra{\langle}{\rvert}
\DeclarePairedDelimiter\ket{\lvert}{\rangle}
\DeclarePairedDelimiterX\braket[2]{\langle}{\rangle}{#1\delimsize\vert#2}
\DeclarePairedDelimiterXPP{\btrace}[2]{\tr_{#1}}{(}{)}{}{{#2}}
\DeclarePairedDelimiterXPP{\trace}[1]{\tr}{(}{)}{}{{#1}}
\DeclarePairedDelimiterXPP{\inprod}[2]{\inne}{(}{)}{}{{#1,#2}}
\DeclarePairedDelimiterX{\set}[1]{\{}{\}}{\renewcommand\given{\SetSymbol}#1}
\DeclareMathOperator{\spec}{spec}
\DeclareMathOperator{\prj}{pr}
\DeclareMathOperator{\exte}{ext}
\DeclareMathOperator{\redu}{red}
\newcommand{\indifset}[1][]{\smash{\mathscr{I}_{#1}}}
\newcommand{\projindifset}[1][]{\indifset[{\proj[#1]}]}
\newcommand{\hilbertspace}[1][]{\mathscr{X}^{#1}}
\newcommand{\statespace}[1][]{\bar{\mathscr{X}}^{#1}}
\newcommand{\subspace}{{\mathcal{V}}}
\newcommand{\subunion}{\mathcal{S}}
\newcommand{\reals}{\mathbb{R}}
\newcommand{\nonnegreals}{\reals_{\geq0}}
\newcommand{\posreals}{\reals_{>0}}
\DeclareMathOperator{\spa}{span}
\newcommand{\linspanof}[2][]{\spa\group[#1]{#2}}
\newcommand{\densities}{\mathscr{R}}
\newcommand{\weakleq}{\leq}
\newcommand{\weakgeq}{\geq}
\newcommand{\weakgt}{\gneq}
\newcommand{\stronggt}{>}
\newcommand{\rngweakgeq}[1][]{\mathrel{\smash{\weakgeq_{\proj[#1]}}}}
\newcommand{\rngweakleq}[1][]{\mathrel{\smash{\weakleq_{\proj[#1]}}}}
\newcommand{\rngstronggt}[1][]{\mathrel{\smash{\stronggt_{\proj[#1]}}}}
\newcommand{\posdefmeasurements}{\smash{\measurements_{\stronggt\zero}}}
\newcommand{\rngposdefmeasurements}[1][]{\rng(\proj[#1])_{\stronggt\zero}}
\newcommand{\possemidefmeasurements}{\smash{\measurements_{\weakgeq\zero}}}
\newcommand{\rngpossemidefmeasurements}[1][]{\rng(\proj[#1])_{\weakgeq\zero}}
\newcommand{\newbackground}[1][\subspace]{\measurements_{\smash{\stronggt}}^{\smash{#1}}}
\newcommand{\projnewbackground}[1][]{\measurements_{\smash{\stronggt}}^{\smash{\proj[#1]}}}
\newcommand{\negdefmeasurements}{\smash{\measurements_{\stronggt\zero}}}
\newcommand{\negsemidefmeasurements}{\smash{\measurements_{\weakleq\zero}}}
\newcommand{\desirset}[1][]{\mathscr{D}_{#1}}
\newcommand{\assessment}{\mathscr{A}}
\newcommand{\utilities}{\mathscr{U}}
\newcommand{\credalset}{\mathscr{M}}
\DeclareMathOperator{\posi}{posi}
\DeclareMathOperator{\rng}{rng}
\newcommand{\natext}{\mathscr{E}}
\newcommand{\projlocalcond}[2][]{{#2}\rfloor\rng(\proj[#1])}
\newcommand{\localcond}[2][\subspace]{{#2}\rfloor{#1}}
\newcommand{\projcond}[2][]{{#2}\vert\rng(\proj[#1])}
\newcommand{\cond}[2][\subspace]{{#2}\vert{#1}}
\newcommand{\fket}[1][]{\ket{\phi_{#1}}}
\newcommand{\fbra}[1][]{\bra{\phi_{#1}}}
\newcommand{\gket}[1][]{\ket{\psi_{#1}}}
\newcommand{\gbra}[1][]{\bra{\psi_{#1}}}
\newcommand{\uket}{\ket{\Psi}}
\newcommand{\ubra}{\bra{\Psi}}
\DeclareMathOperator{\inne}{in}
\newcommand{\measurement}[1]{\hat{#1}}
\newcommand{\measurements}{\mathscr{H}}
\newcommand{\projection}[1][]{\measurement{P}_{#1}}
\newcommand{\density}[1][]{\measurement{\rho}_{#1}}
\DeclareMathOperator{\tr}{Tr}
\newcommand{\identity}{\measurement{I}}
\newcommand{\zero}{\measurement{0}}
\newcommand{\orth}{\perp}
\newcommand{\val}{u}
\newcommand{\utility}[1]{\val_{\measurement{#1}}}
\newcommand{\proj}[1][]{\prj_{#1}}
\newcommand{\invproj}[1][]{\smash{\prj^{-1}_{#1}}}
\newcommand{\projof}[2][]{\proj[#1]\group{\measurement{#2}}}
\newcommand{\reduce}[1][]{\smash{\redu_{\proj[#1]}}}
\newcommand{\invreduce}[1][]{\smash{\redu^{-1}_{\proj[#1]}}}
\newcommand{\reduceof}[2][]{\smash{\redu_{\proj[#1]}\group{\measurement{#2}}}}
\newcommand{\extend}[1][]{\smash{\exte_{\proj[#1]}}}
\newcommand{\invextend}[1][]{\smash{\exte^{-1}_{\proj[#1]}}}
\newcommand{\extendof}[2][]{\smash{\exte_{\proj[#1]}\group{\measurement{#2}}}}
\newcommand{\eigval}{\lambda}
\newcommand{\eigket}[1][]{\ket{a_{#1}}}
\newcommand{\eigbra}[1][]{\bra{a_{#1}}}
\newcommand{\eigspace}{{\mathscr{E}}}
\DeclareMathOperator{\expe}{E}
\newcommand{\expectation}[1][]{\expe_{#1}}
\newcommand{\prevsymbol}{\mathrm{P}}
\newcommand{\linprev}[1][]{\smash{\prevsymbol_{#1}}}
\newcommand{\lowprev}[1][]{\smash{\underline{\prevsymbol}}_{#1}}
\newcommand{\uppprev}[1][]{\smash{\overline{\prevsymbol}}_{#1}}
\newcommand{\lprice}[1][]{\smash{\underline{\Lambda}}_{#1}}
\newcommand{\uprice}[1][]{\smash{\overline{\Lambda}}_{#1}}
\newcommand{\con}{\alpha}
\newcommand{\then}{\Rightarrow}
\newcommand{\ifandonlyif}{\Leftrightarrow}
\newcommand{\spectrum}[1]{\spec(\measurement{#1})}
\newcommand{\spectrumpure}[1]{\spec(#1)}
\newcommand{\utilitypure}[1]{\val_{#1}}
\newcommand{\bolleke}{\vcenter{\hbox{\scalebox{1}{\(\bullet\)}}}}
\title{
  Conditioning through indifference in quantum mechanics
}
\author[1]{Keano De Vos}
\author[1]{Gert de Cooman}
\affil[1]{Foundations Lab for imprecise probabilities, Ghent University, Belgium}
\begin{document}
\maketitle

\begin{abstract}
We can learn (more) about the state that a quantum system is in through measurements.
We look at how to describe the uncertainty about a quantum system's state conditional on executing such measurements.
We show that by exploiting the interplay between desirability, coherence and indifference, a general rule for conditioning can be derived.
We then apply this rule to conditioning on measurement outcomes, and show how it generalises to conditioning on a set of measurement outcomes.
\end{abstract}

\begin{keywords}
Quantum mechanics, indifference, conditioning, desirability, updating, measurements
\end{keywords}

\section{Introduction}\label{sec:introduction}
Since we can learn (more) about the state that a quantum system is in through measurements, it's of paramount importance to understand how to make decisions based on the outcomes of these measurements.
This is the central question we'll answer here: how to represent the uncertainty about a quantum system's state conditional on the outcome of a measurement?

This question has been addressed in the literature, most famously by \citeauthor{luders1950} \cite{luders1950}, but most of the extant answers are based on the --- arguably too narrow --- notion of probability in quantum mechanics.
That is why we build upon the more general sets of desirable measurements framework, which goes back to \citeauthor{Benavoli_2016} \cite{Benavoli_2016}, and which we recently explored and tried to justify in \cite{devos2025:quantum:decision}.
In that earlier work of ours on desirable measurements, we developed a decision-theoretic argument involving imprecise probabilities to model the uncertainty about a quantum system's state.
This has led to a similar mathematical framework as that first introduced by \citeauthor{Benavoli_2016} \cite{Benavoli_2016}, but with a different interpretation.

Our argument there proceeds along the following lines.
The system is in an unknown state~\(\uket\) in the state space~\(\statespace\).
Hermitian operators \(\measurement{A}\) represent measurements, which allow us to interact with the system and learn more about the system state.
We associate with each such measurement operator a utility function~\(\utility{A}\colon\statespace\to\reals\), which represents the reward associated with performing that measurement.
If the system is in state~\(\gket\), then \(\utility{A}(\gket)\) is the utility associated with performing the measurement~\(\measurement{A}\): performing measurement~\(\measurement{A}\) on the system in the unknown state~\(\uket\) results in an uncertain reward~\(\utility{A}(\uket)\).
Through decision-theoretic postulates based on the non-probabilistic foundations of quantum mechanics, we show that this utility function must have the form~\(\utility{A}(\fket)=\fbra\measurement{A}\fket\) for all~\(\fket\in\statespace\).\footnote{Our argumentation in \cite{devos2025:quantum:decision}, though different from Deutsch's in \cite{deutsch1988}, therefore comes to the same conclusions.}
The uncertainty of a rational subject --- called You --- about the unknown state~\(\uket\) can then be captured by expressing preferences between measurements, via preferences between their associated uncertain rewards.
Such a (partial) preference ordering on measurements is therefore a model for Your uncertainty about~\(\uket\).
Equivalently, You can use a so-called \emph{set of desirable measurements}, which are those measurements You prefer to the status quo --- the null measurement.
Modelling uncertainty in this way follows the sets of desirable gambles approach that's common in imprecise probabilities research; see also \cite{ITIP,Couso,Lesot_coherent_2020,miranda2010,moral,quaeghebeur2015:statement,lower_previsions,Walley}.

Here, using the power of indifference statements, already proven invaluable in our earlier work \cite{devos2023:indistinguishability}, we'll show how to extend this framework to deal with conditioning in quantum mechanics.
After a concise introduction to the desirable measurements approach in \cref{sec:desirability}, we have a closer look in \cref{sec:conditioning} at how to represent Your new knowledge of a measurement outcome in this framework.
In \cref{sec:abstract}, we look at an abstract representation of updating through indifference statements, which we then apply in \cref{sec:back:to:conditioning} to conditioning on measurement outcomes.
We thus retrieve a conditioning rule similar to the one by \citeauthor{Benavoli_2016} \cite{Benavoli_2016} but with a different interpretation and broader scope.
Finally, in \cref{sec:general:conditioning}, we show how our abstract approach allows us to extend this to conditioning on a set of measurement outcomes, as a first step towards conditioning on positive operator valued measures.

\section{Desirability in quantum mechanics}\label{sec:desirability}
To make this paper self-contained, we'll first provide a brief overview of the framework we're using for dealing with uncertainty in quantum mechanics, and revisit some of the more relevant and important concepts in the sets of desirable measurements framework first introduced by \citeauthor{Benavoli_2016} \cite{Benavoli_2016}, which we later provided a decision-theoretic justification for; see for instance \cite{devos2025:quantum:decision}.

\subsection{Quantum mechanics}
The framework is based on combining ideas from decision theory with the non-probabilistic principles of quantum mechanics; for an account of the basics of quantum mechanics, see \citep{Cohen1,Nielsen}.

The \emph{state}~\(\gket\) of a quantum system is a normalised element of a complex \emph{Hilbert space}~\(\hilbertspace\).
To keep the discussion simple, \emph{we'll restrict ourselves to finite-dimensional Hilbert spaces}, which can for instance be used to model the spin or (with some extra assumptions) the energy of a bounded electron.
Such finite-dimensional spaces are particularly useful in quantum computing and quantum cryptography \cite{Nielsen}.
We'll use the Dirac notation: a \emph{ket}~\(\gket\) is a vector in \(\hilbertspace\), and the \emph{bra}~\(\gbra\) its adjoint.
The \emph{state space}~\(\statespace\) contains all normalised kets.

A \emph{measurement} on the system is represented by a Hermitian operator~\(\measurement{A}\coloneqq\sum_{k=1}^n\eigval_k\eigket[k]\eigbra[k]\) on~\(\hilbertspace\), where the real numbers \(\eigval_1\), \dots, \(\eigval_n\) are its eigenvalues and the kets \(\eigket[1]\), \dots, \(\eigket[n]\) in \(\statespace\) its corresponding pairwise orthonormal eigenkets.
The possible outcomes of a measurement \(\measurement{A}\) are the eigenvalues~\(\eigval_1,\dots,\eigval_n\), which we collect in its \emph{spectrum} \(\spectrum{A}\coloneqq\set{\eigval_1,\dots,\eigval_n}\).
We denote the real linear space of all such Hermitian operators by~\(\measurements(\hilbertspace)\), or simply \(\measurements\) if no confusion is possible.

\subsection{Utility functions}
We want to represent beliefs of a subject, whom we'll call You, about the unknown quantum mechanical state~\(\uket\) of a system.
You can interact with the system through measurements~\(\measurement{A}\in\measurements\), which we can see as possible acts or options.
As is common in decision theory \cite{anscombe1963,aumann1962,aumann1964,Lesot_coherent_2020,Definetti,nau2006,Walley,zaffalon2017:incomplete:preferences}, Your uncertainty will be described by Your preferences between these different acts, and we attach to each such act/measurement~\(\measurement{A}\) a utility function~\(\utility{A}\colon\statespace\to\reals\), where \(\utility{A}(\fket)\) is the reward associated with performing the measurement~\(\measurement{A}\) when the system is in state~\(\fket\), expressed in units of some linear utility.

We have argued elsewhere \cite{devos2025:quantum:decision} that a number of decision-theoretic principles, based on the non-probabilistic postulates of quantum mechanics, leave You with no choice about which utility functions~\(\utility{A}\) to use: they require them to take the form \(\utility{A}(\fket)=\fbra\measurement{A}\fket\) for all~\(\fket\in\statespace\).
The linear space~\(\utilities\) of all utility functions is therefore linearly isomorphic to the real linear space~\(\measurements\): we can --- and will --- identify measurements and their utility functions.

\subsection{Desirability}
The idea behind our approach is that Your uncertainty about the system's unknown state~\(\uket\) can be modelled through a strict partial preference ordering between uncertain rewards~\(\utility{A}(\uket)\), which is equivalent to a strict partial vector ordering on the linear space~\(\measurements\).
A mathematically equivalent model for such a preference relation is a set of desirable utility functions: those uncertain rewards that You strictly prefer to the zero utility function~\(0\), or equivalently, a \emph{set of desirable measurements}~\(\desirset\subseteq\measurements\): those measurements that You strictly prefer to the status quo~\(\zero\).\footnote{The operator \(\zero\) is the unique Hermitian operator all of whose eigenvalues are zero, and whose utility function \(\utility{0}\) is identically zero.}

Typically, rationality criteria are then imposed on such a set of desirable utility functions \cite{ITIP,Lesot_coherent_2020,Walley}, which can be readily translated to the desirable measurements framework.
We call a set of desirable measurements~\(\desirset\) \emph{coherent} if for all~\(\measurement{A},\measurement{B}\in\measurements\) and all~\(\lambda\in\posreals\):\footnote{These are the rationality criteria for Accept-Favour models \cite{quaeghebeur2015:statement}, based on a set of favourable options \(\desirset\), that respect the background model consisting of the accepted measurements \(\possemidefmeasurements\) and favourable measurements \(\negdefmeasurements\).}
\begin{enumerate}[label={\upshape D\arabic*.},ref={\upshape D\arabic*},labelwidth=*,leftmargin=*,itemsep=0pt]
\item\label{axiom:desirs:consistency} \(\negsemidefmeasurements\cap\desirset=\emptyset\);\hfill[rejecting possible loss]
\item\label{axiom:desirs:background} \(\posdefmeasurements\subseteq\desirset\);\hfill[accepting sure gain]
\item\label{axiom:desirs:monotonicity} if \(\measurement{A}\in\desirset\) and \(\measurement{B}\weakgeq\measurement{A}\) then \(\measurement{B}\in\desirset\);\hfill[monotonicity]
\item\label{axiom:desirs:additivity} \(\measurement{A},\measurement{B}\in\desirset\then\measurement{A}+\measurement{B}\in\desirset\);\hfill[additivity]
\item\label{axiom:desirs:scaling} \(\measurement{A}\in\desirset\then\lambda\measurement{A}\in\desirset\).\hfill[positive scaling]
\end{enumerate}
Here, \(\stronggt\), \(\weakgeq\) and \(\weakgt\) are the vector orderings defined by
\begin{multline*}
\begin{aligned}
\measurement{A}\stronggt\measurement{B}&\ifandonlyif\min\spectrumpure{\measurement{A}-\measurement{B}}>0\\
\measurement{A}\weakgeq\measurement{B}&\ifandonlyif\min\spectrumpure{\measurement{A}-\measurement{B}}\geq0\\
\measurement{A}\weakgt\measurement{B}&\ifandonlyif\min\spectrumpure{\measurement{A}-\measurement{B}}\geq0\text{ and }\measurement{A}\neq\measurement{B},
\end{aligned}\\
\text{ for all~\(\measurement{A},\measurement{B}\in\measurements\),}
\end{multline*}
\(\posdefmeasurements\coloneqq\set{\measurement{A}\in\measurements\given\measurement{A}\stronggt\zero}\) is the set of \emph{positive definite} measurements and \(\possemidefmeasurements\coloneqq\set{\measurement{A}\in\measurements\given\measurement{A}\weakgeq\zero}\) the set of \emph{positive semidefinite} measurements.
In \cite{Benavoli_2016}, a similar framework involving sets of desirable measurements was used, with a different interpretation and justification, and with a slightly stronger version of~\ref{axiom:desirs:background}.\footnote{Indeed, often in similar contexts, a stronger requirement, such as accepting partial gains, is imposed. We prefer the weaker requirement here in the context of our discussion of indifference further on.}

One interesting aspect of working with partial preference models in the form of coherent sets of desirable measurements, is that they allow for conservative inference; see \cite[Section 3.7]{Walley} and also \cite{quaeghebeur2015:statement,cooman2005:melief:models}.
Suppose You identify some assessment of measurements \(\assessment\subseteq\measurements\) that You deem desirable, then the rationality criteria allow us to find the smallest --- if any --- coherent set of desirable measurements~\(\desirset\) that includes~\(\assessment\).
The constructive criteria \labelcref{axiom:desirs:background,axiom:desirs:monotonicity,axiom:desirs:additivity,axiom:desirs:scaling} prompt us to consider the set
\begin{align}
\natext(\assessment)
\coloneqq&\posi\group[\big]{(\assessment+\possemidefmeasurements)\cup\posdefmeasurements}
\notag\\
=&\posdefmeasurements\cup\group{\posi\group{\assessment}+\possemidefmeasurements},
\label{eq:natex}
\end{align}
where \(\posi\) denotes the positive linear hull.
If \(\natext(\assessment)\) satisfies the destructive criterion \labelcref{axiom:desirs:consistency}, or equivalently, if \(\posi(\assessment)\cap\negsemidefmeasurements=\emptyset\), then we'll call the assessment \(\assessment\) \emph{consistent}.
It can then --- and only then --- be extended to a coherent set of desirable measurements, and the \emph{natural extension} \(\natext(\assessment)\) is then the smallest such set.

\subsection{Coherent (lower and upper) previsions}
With a set of desirable measurements, we can associate a lower price functional~\(\lprice[\desirset]\) and an upper price functional~\(\uprice[\desirset]\) as follows: for all \(\measurement{A}\in\measurements,\)\footnote{\(\identity\) is the identity operator, defined by \(\identity\gket\coloneqq\gket\) for all \(\gket\in\hilbertspace\).}
\begin{gather}
\lprice[\desirset](\measurement{A})
\coloneqq\sup\set{\con\in\reals\given\measurement{A}-\con\identity\in\desirset}
\label{eq:lowprevv}\\
\uprice[\desirset](\measurement{A})
\coloneqq\inf\set{\con\in\reals\given\con\identity-\measurement{A}\in\desirset}.
\label{eq:upprevv}
\end{gather}
The lower price~\(\lprice[\desirset](\measurement{A})\) is Your supremum buying price for the measurement \(\measurement{A}\), or equivalently, for the uncertain reward \(\utility{A}(\uket)\).
The upper price~\(\uprice[\desirset](\measurement{A})\) is Your infimum selling price for the measurement~\(\measurement{A}\).
Observe that \(\lprice[\desirset](\measurement{A})=-\uprice[\desirset](-\measurement{A})\).
It's well-known that the lower price functional~\(\lprice[\desirset]\) fully characterises the coherent set~\(\desirset\) up to border behaviour; see for instance \cite[Sec.~3.8]{Walley} and \cite{Lesot_coherent_2020}.
In this sense, lower price functionals and sets of desirable measurements are (almost) equivalent mathematical models for Your beliefs.

A real functional~\(\lowprev\) on~\(\measurements\) is called a \emph{coherent lower prevision} if there's some coherent set of desirable measurements~\(\desirset\) such that \(\lowprev=\lprice[\desirset]\).
The coherence of a lower prevision is characterised by the following properties: for any~\(\measurement{A},\measurement{B}\in\measurements\) and~\(\lambda\in\nonnegreals\),
\begin{enumerate}[label={\upshape LP\arabic*.},ref={\upshape LP\arabic*},labelwidth=*,leftmargin=*,itemsep=0pt,series=LP,widest=3]
\item\label{it:lowprev:superadditivity} \(\lowprev(\measurement{A}+\measurement{B})\geq\lowprev(\measurement{A})+\lowprev(\measurement{B})\);\hfill\textup{[super-additivity]}
\item\label{it:lowprev:homogeneity} \(\lowprev(\lambda\measurement{A})=\lambda\lowprev(\measurement{A})\);\hfill\textup{[non-negative homogeneity]}
\item\label{it:lowprev:bounds} \(\lowprev(\measurement{A})\geq\min\spectrum{A}\);\footnote{This is equivalent to C3 in \cite{ITIP}, as \(\min\utility{A}=\min\spectrum{A}\) \protect{\cite{devos2025:quantum:decision}}.}\hfill\textup{[accepting sure gains]}
\end{enumerate}
see \cite[C1--C3 and Prop.~2.2]{ITIP} and \cite{Lesot_coherent_2020,Walley}.
We'll denote the \emph{conjugate} upper prevision by \(\uppprev\), where \(\uppprev(\bolleke)\coloneqq-\lowprev(-\,\bolleke)\).
When a coherent lower prevision is self-conjugate, so if \(\lowprev=\uppprev\), we call it a \emph{linear prevision}, or a \emph{coherent prevision}, and simply denote it as~\(\linprev\).
Clearly, \(\linprev\) is a coherent prevision if and only if for all~\(\measurement{A},\measurement{B}\in\measurements\) and all~\(\lambda,\mu\in\reals\),
\begin{enumerate}[label={\upshape P\arabic*.},ref={\upshape P\arabic*},labelwidth=*,leftmargin=*,itemsep=0pt,series=P,widest=2]
\item\label{axiom:prev:linearity} \(\linprev(\lambda\measurement{A}+\mu\measurement{B})=\lambda\linprev(\measurement{A})+\mu\linprev(\measurement{B})\);\hfill\textup{[linearity]}
\item\label{axiom:prev:bounds} \(\linprev(\measurement{A})\geq\min\spectrum{A}\).\hfill\textup{[accepting sure gains]}
\end{enumerate}
We can associate with every coherent lower prevision~\(\lowprev\) the following closed\footnote{\dots\ in the weak\({}^\star\) topology (of point-wise convergence) \cite{Lesot_coherent_2020}.} convex set of dominating coherent previsions
\(\credalset_{\lowprev}\coloneqq\set{\linprev\given\group{\forall\measurement{A}\in\measurements}\linprev(\measurement{A})\geq\lowprev(\measurement{A})}\), called the associated \emph{credal set}.
A straightforward application of the Hahn\textendash Banach Theorem shows that a real bounded functional~\(\lowprev\) is a coherent lower prevision if and only if it's the lower envelope of the credal set~\(\credalset_{\lowprev}\), or in other words, if \(\lowprev(\measurement{A})=\min\set{\linprev(\measurement{A})\given\linprev\in\credalset_{\lowprev}}\) for all~\(\measurement{A}\in\measurements\); see \cite[Props.~2.3 and 2.4]{ITIP} and \cite{Lesot_coherent_2020}.
Coherent lower previsions are equivalent to coherent sets of desirable measurements up to border behaviour, and credal sets are equivalent to coherent lower previsions, so all three types of models can be used to describe Your beliefs.

\subsection{Density operators}
In the standard, probabilistic, framework for dealing with uncertainty in quantum mechanics, the (epistemic) uncertainty about a system's state~\(\uket\) is usually modelled by a (positive) probability mass function~\(p_1,\dots,p_r\) over possible states~\(\gket[1],\dots,\gket[r]\).
Such an `uncertain state' is called a \emph{mixed state}, and corresponds to a \emph{density operator}~\(\density\coloneqq\sum_{k=1}^rp_k\gket[k]\gbra[k]\).
The set of all such density operators is denoted by~\(\densities\).
The following is then a basic result; see \cite[Thm.~2.5]{Nielsen}.

\begin{proposition}\label{prop:dens}
A linear operator \(\density\) on~\(\hilbertspace\) is a density operator if and only if it's a Hermitian operator such that \(\trace{\density}=1\) and \(\density\geq\zero\).\footnote{The trace~\(\trace{\measurement{A}}\) of the Hermitian operator~\(\measurement{A}\) is the sum of its eigenvalues. Given an orthonormal basis~\(\set{\gket[1],\dots,\gket[n]}\) of~\(\hilbertspace\), the trace can also be written as \(\trace{\measurement{A}}=\sum_{k=1}^n\gbra[k]\measurement{A}\gket[k]\).}
\end{proposition}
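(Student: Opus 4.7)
The proof is a straightforward biconditional, so the plan is to handle the two implications separately, with the spectral theorem for Hermitian operators doing the heavy lifting in the nontrivial direction.

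For the forward direction, assume \(\density=\sum_{k=1}^r p_k\gket[k]\gbra[k]\) with \(p_k\geq0\), \(\sum_kp_k=1\), and each \(\gket[k]\in\statespace\). I would verify the three properties in turn: (i) Hermiticity follows because each rank-one projector \(\gket[k]\gbra[k]\) is self-adjoint and the \(p_k\) are real, so the real linear combination stays Hermitian; (ii) the trace computation uses linearity of the trace together with \(\trace{\gket[k]\gbra[k]}=\gbraket[k]=1\) (since the \(\gket[k]\) are normalised), giving \(\trace{\density}=\sum_kp_k=1\); (iii) for positive semidefiniteness, evaluate \(\fbra\density\fket=\sum_kp_k\abs{\braket{\phi}{\psi_k}}^2\geq0\) for arbitrary \(\fket\in\hilbertspace\), which together with Hermiticity gives \(\density\weakgeq\zero\) by the characterisation of \(\weakgeq\) in terms of the spectrum.

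For the reverse direction, assume \(\density\) is Hermitian with \(\trace{\density}=1\) and \(\density\weakgeq\zero\). Invoke the spectral theorem to write \(\density=\sum_{k=1}^n\eigval_k\eigket[k]\eigbra[k]\) with real eigenvalues \(\eigval_k\) and an orthonormal eigenbasis \(\eigket[k]\in\statespace\). The hypothesis \(\density\weakgeq\zero\) forces \(\min\spectrumpure{\density}\geq0\), so every \(\eigval_k\geq0\); and \(\trace{\density}=\sum_k\eigval_k=1\) together with non-negativity makes the \(\eigval_k\) a probability mass function over the orthonormal states \(\eigket[k]\). Hence \(\density\) is, by definition, a density operator (if one insists on strict positivity of the masses, simply discard the indices with \(\eigval_k=0\) and relabel; the orthonormality of the remaining eigenkets is preserved).

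I don't anticipate any serious obstacle; the only mildly delicate point is making sure the \(\weakgeq\) relation defined spectrally in the excerpt coincides with the usual operator-positivity statement \(\fbra\density\fket\geq0\) for all \(\fket\), which is immediate via the spectral decomposition but worth flagging so the forward direction's computation of \(\fbra\density\fket\) actually proves \(\density\weakgeq\zero\) in the sense used here. Beyond that, the proof is essentially bookkeeping around the spectral theorem.
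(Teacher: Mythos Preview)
Your proof is correct and is the standard textbook argument via the spectral theorem. The paper itself does not supply a proof for this proposition; it simply cites Thm.~2.5 of Nielsen's textbook, whose proof is essentially the one you have outlined here.
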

\noindent According to \emph{Born's rule}, the standard probabilistic postulate in quantum mechanics, the expected outcome of a measurement~\(\measurement{A}\) is then \(\expectation[\density](\measurement{A})=\trace{\density\measurement{A}}\).
While the sets of desirable measurements approach doesn't start from the assumption that there are probabilities in quantum mechanics, nor relies on anything remotely related to Born's rule, it does allow us to recover density operators and the trace formula as a special case, as formalised in the following result; see \cite[p.~19]{Benavoli_2016} and \cite{devos2025:quantum:decision}.

\begin{theorem}\label{thm:linear:prevision:representation}
A functional~\(\linprev\) on~\(\measurements\) is a coherent prevision if and only if there's a (then \emph{unique}) density operator \(\density[\linprev]\) in \(\densities\) such that \(\linprev(\measurement{A})=\trace{\density[\linprev]\measurement{A}}\) for all~\(\measurement{A}\in\measurements\).
\end{theorem}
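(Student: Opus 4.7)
The plan is to treat the real linear space $\measurements$ of Hermitian operators as a finite-dimensional real Hilbert space under the Hilbert--Schmidt inner product $(\measurement{A},\measurement{B})\mapsto\trace{\measurement{A}\measurement{B}}$, which is symmetric, real-valued and positive definite on Hermitian operators. Any real linear functional on $\measurements$ is then represented by pairing with a unique Hermitian operator, so the entire content of the theorem lies in matching the coherence properties~\ref{axiom:prev:linearity} and~\ref{axiom:prev:bounds} of $\linprev$ with the characterisation of density operators given by \cref{prop:dens}.

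For the \emph{if} direction I would fix a density operator $\density$ and define $\linprev(\measurement{A})\coloneqq\trace{\density\measurement{A}}$. Property~\ref{axiom:prev:linearity} is immediate from the linearity of the trace. For~\ref{axiom:prev:bounds} I would diagonalise $\measurement{A}=\sum_{k=1}^{n}\eigval_k\eigket[k]\eigbra[k]$, yielding $\linprev(\measurement{A})=\sum_{k=1}^{n}\eigval_k\eigbra[k]\density\eigket[k]$. The coefficients $\eigbra[k]\density\eigket[k]$ are non-negative because $\density\weakgeq\zero$ and they sum to $\trace{\density}=1$, so $\linprev(\measurement{A})$ is a convex combination of the eigenvalues of $\measurement{A}$ and hence at least $\min\spectrum{A}$.

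For the \emph{only if} direction I would invoke the Riesz representation theorem on $\measurements$ equipped with the Hilbert--Schmidt inner product to produce a unique $\density[\linprev]\in\measurements$ such that $\linprev(\measurement{A})=\trace{\density[\linprev]\measurement{A}}$ for all $\measurement{A}\in\measurements$; the required linearity here is exactly~\ref{axiom:prev:linearity}. It then remains to verify the two conditions of \cref{prop:dens}. Taking $\measurement{A}=\identity$ and applying~\ref{axiom:prev:bounds} to both $\identity$ and $-\identity$ pins down $\linprev(\identity)=\trace{\density[\linprev]}=1$. For positive semidefiniteness, I would evaluate $\linprev$ on the rank-one projection $\fket\fbra$ associated with an arbitrary $\fket\in\statespace$; its spectrum is $\set{0,1}$, so~\ref{axiom:prev:bounds} yields $\fbra\density[\linprev]\fket=\trace{\density[\linprev]\fket\fbra}\geq 0$, and since $\fket$ was arbitrary, $\density[\linprev]\weakgeq\zero$.

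I don't anticipate any serious obstacle. The only mildly delicate point is ensuring that the Riesz step is invoked on the real inner product space $\measurements$ rather than on some larger space of complex operators on $\hilbertspace$; this is also what guarantees that $\density[\linprev]$ is automatically Hermitian, and it makes the uniqueness assertion a direct consequence of the non-degeneracy of the Hilbert--Schmidt inner product. The finite-dimensional restriction on $\hilbertspace$ announced earlier removes any topological subtleties from both directions of the argument.
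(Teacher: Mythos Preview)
The paper does not actually supply its own proof of this theorem: it is stated with a reference to \cite[p.~19]{Benavoli_2016} and \cite{devos2025:quantum:decision}, and the text moves on immediately. There is therefore no in-paper argument to compare against.

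Your proposal is correct and is essentially the standard argument one would expect in this setting. The Riesz step on the real inner-product space $(\measurements,\trace{\,\cdot\,\cdot\,})$ is exactly the right way to produce a Hermitian representer and to get uniqueness for free; your verifications of $\trace{\density[\linprev]}=1$ via \ref{axiom:prev:bounds} applied to $\pm\identity$, and of $\density[\linprev]\weakgeq\zero$ via \ref{axiom:prev:bounds} applied to rank-one projections, are clean and complete. The only cosmetic remark is that in the trace-one step you are also tacitly using \ref{axiom:prev:linearity} (to get $\linprev(-\identity)=-\linprev(\identity)$), which you might make explicit.
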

\noindent Therefore, as Your beliefs about \(\uket\) can be described by a credal set~\(\credalset_{\lowprev}\), we can equivalently describe them using a convex closed\footnote{\dots\ in the topology that's isomorphic to the weak\({}^\star\) topology on the space of linear functionals.} set of density operators:
\begin{align}
\densities_{\lowprev}
\coloneqq&\set{\density[\linprev]\given\linprev\in\credalset_{\lowprev}}\notag\\
=&\set{\density\in\densities\given\group{\forall\measurement{A}\in\measurements}\trace{\density\measurement{A}}\geq\lowprev(\measurement{A})},
\label{eq:from:lowprev:to:densities}
\end{align}
and \(\lowprev(\measurement{A})=\min\set{\trace{\density\measurement{A}}\given\density\in\densities_{\lowprev}}\) for all~\(\measurement{A}\in\measurements\).

\section{Conditioning on a subspace}\label{sec:conditioning}
In accordance with the postulates of quantum mechanics, You can interact with a quantum system through measurements.
The existing (and quite fruitful) ideas and techniques for \emph{conditioning} of sets of desirable gambles and dealing with symmetry \cite{walley2000,ITIP,miranda2010,Gertenjasper,Gert,lower_previsions} can also be applied in the present specific context of desirable measurements.
We intend to show here that they lead naturally to an extension of \emph{Lüders conditioning} \cite{luders1950}.
For reasons of conciseness, we'll translate the existing argumentation for desirable gambles directly to the desirable measurements context, without giving an explicit account of the former; we'll leave it to the reader to connect the dots explicitly, should they wish to do so.

Assume that You know, in some way, that the uncertain state \(\uket\) belongs to some subspace \(\subspace\subseteq\hilbertspace\).
This could \emph{for instance} be an eigenspace \(\eigspace_\eigval\) of some measurement \(\measurement{A}\), because You consider the system immediately after having performed that measurement and having observed the outcome \(\eigval\in\spectrum{A}\).
We denote the Hermitian (orthogonal) projection operator onto \(\subspace\) by \(\projection[\subspace]\).

The first question we are going to look at is how we can make sure that Your set of desirable measurements \(\desirset\) reflects this knowledge, besides any other beliefs You might have?
Observe that \(\uket\in\subspace\) is equivalent to \(\projection[\subspace]\uket=\uket\), and therefore also to
\[
\utility{A}\group{\uket}
=\utility{A}\group[\big]{\projection[\subspace]\uket}
=\utilitypure{\projection[\subspace]\measurement{A}\projection[\subspace]}(\uket)
\text{ for all \(\measurement{A}\in\measurements\)}.
\]
Therefore, You will judge any measurement \(\measurement{A}\) for which \(\projection[\subspace]\measurement{A}\projection[\subspace]=\zero\) to be \emph{equivalent to the null measurement \(\zero\)}, simply because the information that \(\uket\in\subspace\) makes You sure that the reward resulting from the measurement \(\measurement{A}\) will be zero.
We call any such measurement \emph{indifferent} to You.
Consider the linear projector \(\proj[\subspace]\colon\measurements\to\measurements\) defined by \(\proj[\subspace](\measurement{A})\coloneqq\projection[\subspace]\measurement{A}\projection[\subspace]\), then its kernel is the linear subspace of indifferent measurements \(\indifset[\subspace]\coloneqq\set{\measurement{A}\in\measurements\given\proj[\subspace](\measurement{A})=\zero}\).
It's important to observe at this point that
\begin{equation}\label{eq:ordering:and:projection}
\measurement{A}\weakgeq\zero\then\projof[\subspace]{A}\weakgeq\zero,
\text{ for all \(\measurement{A}\in\measurements\).}
\end{equation}
Interestingly, the elements of \(\indifset[\subspace]\) are the \emph{only} measurements that You definitely should be indifferent to, based on the knowledge that \(\uket\in\subspace\).
To see why, consider any \(\measurement{A}\in\measurements\) such that \(\proj[\subspace](\measurement{A})\neq\zero\), then there's some \(\gket\in\statespace\) for which \(0\neq\utilitypure{\projection[\subspace]\measurement{A}\projection[\subspace]}(\gket)=\utility{A}(\projection\gket)\), or in other words, there's some \(\fket\coloneqq\projection\gket\in\subspace\) for which \(\utility{A}(\fket)\neq0\), making it difficult to argue that You should be indifferent to \(\measurement{A}\).

The converse also holds: if You're indifferent to all measurements in \(\indifset[\subspace]\), then You must believe that, necessarily, \(\uket\in\subspace\).
Indeed, assume towards contradiction that You believe it possible that \(\uket\) equals some \(\fket\notin\subspace\), and consider the measurement \(\measurement{A}\coloneq-\projection[\subspace^\orth]\fket\fbra\projection[\subspace^\orth]\)\footnote{We denote the orthogonal complement of a subspace \(\subspace\) by \(\subspace^\orth\), and then we have that \(\projection[\subspace^\orth]=\identity-\projection[\subspace]\).}, which gives You a negative reward if \(\fket\) obtains and a non-positive reward otherwise.
Since, clearly, \(\measurement{A}\in\indifset[\subspace]\), You're then indifferent to, and therefore have fair price zero for,\footnote{It's an easy exercise to show that if \(\desirset+\indifset\subseteq\desirset\) for some linear space \(\indifset\), then \(\lprice[\desirset]\group{\measurement{A}}=\uprice[\desirset]\group{\measurement{A}}=0\) for all~\(\measurement{A}\in\indifset\) so Your fair price for any indifferent measurement is zero; see also \cref{eq:desirability:indifference}.} a loss You deem possible, without any possibility of gain; this is unreasonable.

Your existing beliefs, as captured in a coherent set of desirable measurements \(\desirset\) in \(\measurements\) are in accordance with Your knowledge that \(\uket\in\subspace\) if they satisfy
\begin{equation}\label{eq:desirability:indifference}
\desirset+\indifset[\subspace]\subseteq\desirset,
\end{equation}
which expresses that the desirability of any measurement remains unaffected when adding indifferent measurements to it.\footnote{For a nice justification of this condition, see \cite{quaeghebeur2015:statement}.}
We call any set of desirable measurements that satisfies this condition \emph{\(\indifset[\subspace]\)-compatible}.

Instead of working with indifferent measurements, we can also translate the knowledge that \(\uket\in\subspace\) into the framework of desirability by looking at the measurements~\(\measurement{A}\) that are \emph{invariant} under \(\proj[\subspace]\), in the sense that \(\measurement{A}=\proj[\subspace]\group{\measurement{A}}=\projection[\subspace]\measurement{A}\projection[\subspace]\).
Under this invariance condition, \(\measurement{A}\in\desirset\ifandonlyif\projection[\subspace]\measurement{A}\projection[\subspace]\in\desirset\).
We'll call any set of desirable measurements in \(\measurements\) \emph{\(\subspace\)-focused} if
\begin{equation}\label{eq:desirability:focused}
\group{\forall\measurement{A}\in\measurements}
\group{\measurement{A}\in\desirset\ifandonlyif\proj[\subspace]\group{\measurement{A}}\in\desirset}.
\end{equation}
We now show that these two conditions~\labelcref{eq:desirability:indifference,eq:desirability:focused} are equivalent under coherence, thus answering the question of how to express that Your set of desirable measurements \(\desirset\) incorporates Your knowledge that \(\uket\in\subspace\).

\begin{proposition}\label{prop:indifference:subspace}
Any coherent set of desirable measurements in \(\measurements\) is \(\indifset[\subspace]\)-compatible if and only if it's \(\subspace\)-focused.
\end{proposition}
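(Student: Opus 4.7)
The plan is to observe that $\proj[\subspace]$ is an idempotent linear map on $\measurements$, since $\projection[\subspace]^2=\projection[\subspace]$ implies $\proj[\subspace]\group{\proj[\subspace]\group{\measurement{A}}}=\projection[\subspace]\projection[\subspace]\measurement{A}\projection[\subspace]\projection[\subspace]=\projection[\subspace]\measurement{A}\projection[\subspace]=\proj[\subspace]\group{\measurement{A}}$. Consequently every $\measurement{A}\in\measurements$ admits the decomposition
\[
\measurement{A}=\proj[\subspace]\group{\measurement{A}}+\group{\measurement{A}-\proj[\subspace]\group{\measurement{A}}},
\]
whose second summand lies in $\ker\proj[\subspace]=\indifset[\subspace]$. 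Once this decomposition is in hand the rest of the argument is just bookkeeping with the linear subspace $\indifset[\subspace]$.

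For the forward direction, assume $\desirset+\indifset[\subspace]\subseteq\desirset$. If $\measurement{A}\in\desirset$, then since $\proj[\subspace]\group{\measurement{A}}-\measurement{A}\in\indifset[\subspace]$ (using that $\indifset[\subspace]$ is a linear subspace, so closed under sign-flip), we get $\proj[\subspace]\group{\measurement{A}}=\measurement{A}+\group{\proj[\subspace]\group{\measurement{A}}-\measurement{A}}\in\desirset+\indifset[\subspace]\subseteq\desirset$. Conversely, if $\proj[\subspace]\group{\measurement{A}}\in\desirset$, then $\measurement{A}=\proj[\subspace]\group{\measurement{A}}+\group{\measurement{A}-\proj[\subspace]\group{\measurement{A}}}\in\desirset+\indifset[\subspace]\subseteq\desirset$. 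So $\desirset$ is $\subspace$-focused.

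For the backward direction, assume $\desirset$ is $\subspace$-focused, and pick any $\measurement{A}\in\desirset$ and $\measurement{B}\in\indifset[\subspace]$. By linearity of $\proj[\subspace]$ and $\proj[\subspace]\group{\measurement{B}}=\zero$, we have $\proj[\subspace]\group{\measurement{A}+\measurement{B}}=\proj[\subspace]\group{\measurement{A}}$. The forward implication of the focused biconditional applied to $\measurement{A}$ gives $\proj[\subspace]\group{\measurement{A}}\in\desirset$, hence $\proj[\subspace]\group{\measurement{A}+\measurement{B}}\in\desirset$. The reverse implication of the focused biconditional applied to $\measurement{A}+\measurement{B}$ then yields $\measurement{A}+\measurement{B}\in\desirset$, establishing $\desirset+\indifset[\subspace]\subseteq\desirset$.

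There is no real obstacle here: coherence does not enter the argument, and the only substantive step is recognising that $\proj[\subspace]$ is an idempotent linear projector on $\measurements$ whose kernel is exactly $\indifset[\subspace]$, so the equivalence is essentially a restatement of the standard fact that a linear subset of a vector space is closed under addition of a subspace iff it factors through the quotient by that subspace.
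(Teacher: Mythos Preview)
Your proof is correct and follows essentially the same route as the paper's: both directions rely on the decomposition $\measurement{A}=\proj[\subspace](\measurement{A})+(\measurement{A}-\proj[\subspace](\measurement{A}))$ with the second summand in $\indifset[\subspace]$, and the sufficiency argument in both cases computes $\proj[\subspace](\measurement{A}+\measurement{B})=\proj[\subspace](\measurement{A})$ and applies the focused biconditional. Your closing remark that coherence is nowhere used is a worthwhile observation the paper leaves implicit.
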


\begin{proof}
For necessity, assume that \(\desirset\) is \(\indifset\)-compatible.
For any~\(\measurement{A}\in\measurements\), \(\proj[\subspace](\measurement{A}-\projof[\subspace]{A})=\projof[\subspace]{A}-\projof[\subspace]{A}=\zero\), whence \(\measurement{A}-\projof[\subspace]{A}\in\indifset\) and \(\proj[\subspace](\measurement{A})-\measurement{A}\in\indifset\).
But now \(\measurement{A}\in\desirset\) implies that \(\projof[\subspace]{A}=\measurement{A}+\group{\projof[\subspace]{A}-\measurement{A}}\in\desirset+\indifset\subseteq\desirset\); similarly, \(\projof[\subspace]{A}\in\desirset\) implies that \(\measurement{A}=\projof{A}+\group{\measurement{A}-\projof[\subspace]{A}}\in\desirset+\indifset\subseteq\desirset\).

For sufficiency, assume that \(\desirset\) is \(\projection[\subspace]\)-focused.
For any \(\measurement{A}\in\desirset\) and \(\measurement{B}\in\indifset\), \(\proj(\measurement{A}+\measurement{B})=\proj[\subspace](\measurement{A})+\measurement{0}=\projof[\subspace]{A}\in\desirset\), so \(\measurement{A}+\measurement{B}\in\desirset\).
\end{proof}

\section{Abstract updating results}\label{sec:abstract}
Before continuing with the problem of conditioning on a subspace, we're going to lift its formulation to a more abstract level, solve it there, and then in \Cref{sec:back:to:conditioning} translate the solution back to the original context.
This will then also allow us in \Cref{sec:general:conditioning} to deal effortlessly with more general types of conditioning.

Consider any linear operator \(\proj\colon\measurements\to\measurements\) that's a \emph{projection}, in the sense that \(\proj\circ\proj=\proj\).
Inspired by the condition~\eqref{eq:ordering:and:projection}, we'll require in addition that
\begin{equation}\label{eq:ordering:and:projection:general}
\measurement{A}\weakgeq\zero\then\projof{A}\weakgeq\zero,
\text{ for all \(\measurement{A}\in\measurements\).}
\end{equation}
Crucially for what follows, we'll assume that its kernel \(\projindifset\coloneqq\set{\measurement{A}\in\measurements\given\projof{A}=\zero}\) is Your linear space of indifferent measurements.

\subsection{Representing the indifference}
The first question we're going to address, is how to express that a given coherent set of desirable measurements \(\desirset\) \emph{incorporates}, possibly along with other beliefs, Your indifference to the measurements in \(\projindifset\).

Inspired by the discussion in the previous section, we call any coherent set of desirable measurements \(\desirset\) \emph{\(\projindifset\)-compatible} if
\begin{equation}\label{eq:desirability:indifference:general}
\desirset+\projindifset\subseteq\desirset,
\end{equation}
and \emph{\(\rng(\proj)\)-focused} if
\begin{equation}\label{eq:desirability:focused:general}
\group{\forall\measurement{A}\in\measurements}
\group{\measurement{A}\in\desirset\ifandonlyif\projof{A}\in\desirset},
\end{equation}
where \(\rng(\proj)\coloneqq\proj\group{\measurements}=\set{\measurement{A}\in\measurements\given\projof{A}=\measurement{A}}\) is the \emph{range} of the projection~\(\proj\).
\Cref{prop:indifference:subspace} generalises at once to the following result.

\begin{proposition}\label{prop:indifference:subspace:general}
Any coherent set of desirable measurements in \(\measurements\) is \(\projindifset\)-compatible if and only if it's \(\rng(\proj)\)-focused.
\end{proposition}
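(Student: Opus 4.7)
The plan is to mimic directly the proof of \cref{prop:indifference:subspace}, because nothing in that argument actually used the specific form $\proj[\subspace](\measurement{A})=\projection[\subspace]\measurement{A}\projection[\subspace]$; only the linearity of $\proj$ and the projection identity $\proj\circ\proj=\proj$ were essential. The compatibility condition~\eqref{eq:ordering:and:projection:general} is not even needed for this particular equivalence, it will play a role later on when studying natural extension under indifference.

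For necessity, I would start from an $\projindifset$-compatible coherent $\desirset$ and, for an arbitrary $\measurement{A}\in\measurements$, compute
\[
\projof{\measurement{A}-\projof{A}}
=\projof{A}-\projof[\big]{\projof{A}}
=\projof{A}-\projof{A}
=\zero,
\]
where the second equality uses $\proj\circ\proj=\proj$. Hence both $\measurement{A}-\projof{A}$ and its negative lie in $\projindifset$. From $\measurement{A}\in\desirset$ and $\projof{A}-\measurement{A}\in\projindifset$ I then get $\projof{A}=\measurement{A}+(\projof{A}-\measurement{A})\in\desirset+\projindifset\subseteq\desirset$; conversely, from $\projof{A}\in\desirset$ and $\measurement{A}-\projof{A}\in\projindifset$ I get $\measurement{A}=\projof{A}+(\measurement{A}-\projof{A})\in\desirset+\projindifset\subseteq\desirset$. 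This establishes~\eqref{eq:desirability:focused:general}.

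For sufficiency, I would assume $\desirset$ is $\rng(\proj)$-focused, pick $\measurement{A}\in\desirset$ and $\measurement{B}\in\projindifset$, and use linearity of $\proj$ to write
\[
\projof{\measurement{A}+\measurement{B}}
=\projof{A}+\projof{B}
=\projof{A}+\zero
=\projof{A}.
\]
Since $\measurement{A}\in\desirset$, the focused condition yields $\projof{A}\in\desirset$, so $\projof{\measurement{A}+\measurement{B}}\in\desirset$, and applying the focused condition once more gives $\measurement{A}+\measurement{B}\in\desirset$, as required. I expect no genuine obstacle here: the argument is purely algebraic, and the only subtlety is making sure that the projection identity is used exactly where needed and that the linearity of $\proj$ is invoked (which was trivial in the concrete case but must be stated in the abstract one).
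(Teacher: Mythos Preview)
Your proposal is correct and essentially identical to the paper's approach: the paper provides no separate proof for \cref{prop:indifference:subspace:general}, merely noting that \cref{prop:indifference:subspace} ``generalises at once,'' and indeed that earlier proof uses only the linearity and idempotence of \(\proj\), exactly as you identify. Your remark that condition~\eqref{eq:ordering:and:projection:general} is not needed here is also accurate.
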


\noindent The \(\rng(\proj)\)-focused condition hints at the idea that, as far as \(\projindifset\)-indifference is concerned, all the action actually takes place in the linear subspace \(\rng(\proj)\) of \(\measurements\), and that we can represent this action using coherent sets of desirable measurements in that lower-dimensional subspace.
To formalise this idea, we need a few extra notations and definitions.

We'll need maps that allow us to turn measurements in \(\measurements\) into measurements in \(\rng(\proj)\) and vice versa.
We let \(\reduce\colon\measurements\to\rng(\proj)\colon\measurement{A}\mapsto\projof{A}\), so \(\reduceof{A}\) is the restriction of \(\projof{A}\) to the linear subspace \(\rng(\proj)\).
On the other hand, we consider the canonical embedding \(\extend\colon\rng(\proj)\to\measurements\colon\measurement{C}\mapsto\measurement{C}\).
Keep in mind that \(\extend(\reduceof{A})=\projof{A}\) and \(\reduce(\extendof{C})=\measurement{C}\) for all \(\measurement{A}\in\measurements\) and \(\measurement{C}\in\rng(\proj)\),

If we want to define coherence for sets of desirable measurements in \(\rng(\proj)\), we also need to introduce respective counterparts \(\rngstronggt\) and \(\rngweakgeq\) on \(\rng(\proj)\) of the vector orderings \(\stronggt\) and \(\weakgeq\) on \(\measurements\):\footnote{In these expressions, we've identified \(\extendof{C}\) and \(\measurement{C}\). These orderings are isomorphic copies of the so-called \emph{quotient orderings} of \(\stronggt\) and \(\weakgeq\) on the quotient space \(\measurements/\projindifset\) of \(\measurements\) by the kernel~\(\projindifset\) of \(\proj\), the latter being linearly isomorphic to \(\rng(\proj)\). Moreover, \(\rngposdefmeasurements=\proj\group{\posdefmeasurements}\) and \(\rngpossemidefmeasurements=\proj\group{\possemidefmeasurements}\). Further motivation for these definitions will be given in \Cref{sec:back:to:conditioning}.}
\begin{equation*}
\left.
\begin{aligned}
\measurement{C}\rngstronggt\zero&\ifandonlyif\group{\exists\measurement{D}\in\projindifset}\,\measurement{C}\stronggt\measurement{D}\\
\measurement{C}\rngweakgeq\zero&\ifandonlyif\group{\exists\measurement{D}\in\projindifset}\,\measurement{C}\weakgeq\measurement{D}
\end{aligned}
\right\}
\text{ for any \(\measurement{C}\in\rng(\proj)\)},
\end{equation*}
leading to the convex cones \(\rngposdefmeasurements\) and \(\rngpossemidefmeasurements\) in \(\rng(\proj)\) in the usual manner.
The requirement~\eqref{eq:ordering:and:projection:general} has interesting consequences for \(\rngweakgeq\).

\begin{proposition}\label{prop:ordering:and:projection:general}
\(\rngpossemidefmeasurements=\possemidefmeasurements\cap\rng(\proj)\).
\end{proposition}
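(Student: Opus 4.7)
The plan is to establish the equality by proving the two inclusions separately, unpacking the definition of $\rngweakgeq$ and exploiting the assumed property~\eqref{eq:ordering:and:projection:general} together with the basic facts $\projof{C}=\measurement{C}$ for $\measurement{C}\in\rng(\proj)$ and $\projof{D}=\zero$ for $\measurement{D}\in\projindifset$.

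For the inclusion $\possemidefmeasurements\cap\rng(\proj)\subseteq\rngpossemidefmeasurements$, I would take any $\measurement{C}\in\rng(\proj)$ with $\measurement{C}\weakgeq\zero$ and exhibit $\measurement{D}\coloneqq\zero$ as the required indifferent witness: since $\proj$ is linear we have $\projof{0}=\zero$, so $\zero\in\projindifset$, and trivially $\measurement{C}\weakgeq\zero=\measurement{D}$, whence $\measurement{C}\rngweakgeq\zero$.

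For the converse inclusion $\rngpossemidefmeasurements\subseteq\possemidefmeasurements\cap\rng(\proj)$, I would start from $\measurement{C}\in\rng(\proj)$ with $\measurement{C}\rngweakgeq\zero$, pick some $\measurement{D}\in\projindifset$ with $\measurement{C}\weakgeq\measurement{D}$, and rewrite this as $\measurement{C}-\measurement{D}\weakgeq\zero$. Applying the hypothesis~\eqref{eq:ordering:and:projection:general} then yields $\proj(\measurement{C}-\measurement{D})\weakgeq\zero$, and by linearity of $\proj$ combined with $\projof{C}=\measurement{C}$ and $\projof{D}=\zero$, the left-hand side collapses to $\measurement{C}$. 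Hence $\measurement{C}\weakgeq\zero$, so $\measurement{C}\in\possemidefmeasurements\cap\rng(\proj)$.

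There is no real obstacle: the only substantive step is invoking~\eqref{eq:ordering:and:projection:general} in the nontrivial direction, which is precisely the extra condition we imposed on $\proj$ beyond being a projection; the rest is bookkeeping with the definitions of $\rng(\proj)$ and $\projindifset=\ker(\proj)$.
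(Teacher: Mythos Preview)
Your proof is correct and follows essentially the same approach as the paper's: both argue that the nontrivial inclusion $\rngpossemidefmeasurements\subseteq\possemidefmeasurements\cap\rng(\proj)$ comes from applying~\eqref{eq:ordering:and:projection:general} to $\measurement{C}-\measurement{D}\weakgeq\zero$ and using linearity of $\proj$ together with $\projof{C}=\measurement{C}$ and $\projof{D}=\zero$. The only difference is that the paper dispatches the easy inclusion with the single remark ``It's clearly enough to show that $\measurement{A}\rngweakgeq\zero\then\measurement{A}\weakgeq\zero$'', whereas you spell out the witness $\measurement{D}=\zero$ explicitly.
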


\begin{proof}
Consider any \(\measurement{A}\in\rng(\proj)\), so \(\measurement{A}=\projof{A}\).
It's clearly enough to show that \(\measurement{A}\rngweakgeq\zero\then\measurement{A}\weakgeq\zero\).
So assume that there's some \(\measurement{D}\in\projindifset\) such that \(\measurement{A}+\measurement{D}\weakgeq\zero\), then the requirement~\eqref{eq:ordering:and:projection:general} leads us to conclude that \(\measurement{A}=\projof{A}=\proj\group{\measurement{A}+\measurement{D}}\weakgeq\zero\), so we're done.
\end{proof}

We can now use the map \(\extend\) to turn \emph{any} set of desirable measurements \(\desirset\) in \(\measurements\) into an assessment of desirable measurements~\(\assessment_{\desirset}\) \emph{in the subspace \(\rng(\proj)\)} where all the relevant action is, as follows:
\begin{align}
\assessment_{\desirset}
\coloneqq&\invextend\group{\desirset}
=\set{\measurement{C}\in\rng(\proj)\given\extend(\measurement{C})\in\desirset}\notag\\
=&\desirset\cap\rng(\proj).
\label{eq:indifference:assessment:reduce}
\end{align}
If this assessment turns out to be consistent, it can be extended to a smallest --- most conservative --- coherent set of desirable measurements in \(\rng(\proj)\) using natural extension in that linear subspace [see \cref{eq:natex}]:
\begin{multline}
\projlocalcond{\desirset}
\coloneqq\natext(\assessment_{\desirset})\\
=\rngposdefmeasurements\cup\group[\big]{\posi(\assessment_{\desirset})+\rngpossemidefmeasurements},\label{eq:indifference:representation:reduce}
\end{multline}
Interestingly, this operation has an inverse when restricted to the set of \(\projindifset\)-compatible sets of desirable measurements.
To see this, we introduce a new set of desirable options \emph{in \(\measurements\)}, determined by the set of desirable options \(\projlocalcond{\desirset}\) in \(\rng(\proj)\) as follows:
\begin{multline}\label{eq:indifference:representation:extend}
\projcond{\desirset}
\coloneqq\invreduce(\projlocalcond{\desirset})\\
=\set{\measurement{A}\in\measurements\given\reduce(\measurement{A})\in\projlocalcond{\desirset}}.
\end{multline}

\begin{proposition}[Representation]\label{prop:local:representation:general}
Let \(\desirset\) be any coherent set of desirable measurements on \(\hilbertspace\), then the set of desirable measurements \(\projlocalcond{\desirset}\) in \(\rng(\proj)\) is coherent as well, and  \(\desirset\) is \(\projindifset\)-compatible if and only if \(\projcond{\desirset}=\desirset\).
\end{proposition}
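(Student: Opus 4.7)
The plan is to handle the two assertions separately, reducing the second to establishing the sandwich identity $\projlocalcond{\desirset} = \desirset \cap \rng(\proj)$ under the assumption of $\projindifset$-compatibility. Throughout, I would rely on \Cref{prop:ordering:and:projection:general} to move freely between the ordering $\rngweakgeq$ on $\rng(\proj)$ and the restriction of $\weakgeq$, and on \Cref{prop:indifference:subspace:general} to reinterpret $\projindifset$-compatibility as $\rng(\proj)$-focusedness.

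First I would establish coherence of $\projlocalcond{\desirset}$ inside $\rng(\proj)$. By construction~\eqref{eq:indifference:representation:reduce}, $\projlocalcond{\desirset}$ is the natural extension (computed in $\rng(\proj)$, with respect to the orderings $\rngstronggt$ and $\rngweakgeq$) of the assessment $\assessment_{\desirset}$, so by the characterisation recalled around~\eqref{eq:natex} it suffices to verify that $\assessment_{\desirset}$ is consistent, meaning $\posi(\assessment_{\desirset})$ avoids the ``non-positive'' cone in $\rng(\proj)$. By \Cref{prop:ordering:and:projection:general} this cone equals $\negsemidefmeasurements \cap \rng(\proj)$. Because $\desirset$ is a convex cone by \ref{axiom:desirs:additivity} and \ref{axiom:desirs:scaling}, and $\assessment_{\desirset} = \desirset \cap \rng(\proj)$, we have $\posi(\assessment_{\desirset}) \subseteq \desirset$, and then \ref{axiom:desirs:consistency} immediately yields the required disjointness.

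For the equivalence, one direction is almost definitional: membership of $\measurement{A}$ in $\projcond{\desirset}$ only depends on $\projof{A}$, so $\projcond{\desirset}$ is automatically stable under addition of elements of $\projindifset$; thus $\projcond{\desirset} = \desirset$ forces $\desirset + \projindifset \subseteq \desirset$. For the non-trivial direction, after reinterpreting compatibility as $\rng(\proj)$-focusedness via \Cref{prop:indifference:subspace:general}, I would prove the sandwich identity $\projlocalcond{\desirset} = \desirset \cap \rng(\proj)$ by a two-sided inclusion. The inclusion $\assessment_{\desirset} \subseteq \projlocalcond{\desirset}$ is immediate from~\eqref{eq:indifference:representation:reduce}, since $\zero \in \rngpossemidefmeasurements$. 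For the reverse inclusion $\projlocalcond{\desirset} \subseteq \desirset$, I would split along the two components of the natural extension: elements of $\rngposdefmeasurements = \proj(\posdefmeasurements)$ are of the form $\projof{B}$ for some $\measurement{B} \in \posdefmeasurements \subseteq \desirset$ by \ref{axiom:desirs:background}, and thus lie in $\desirset$ by focusedness; elements of $\posi(\assessment_{\desirset}) + \rngpossemidefmeasurements$ lie in $\desirset$ by \ref{axiom:desirs:additivity}, \ref{axiom:desirs:scaling}, and \ref{axiom:desirs:monotonicity}, using that $\rngpossemidefmeasurements \subseteq \possemidefmeasurements$ by \Cref{prop:ordering:and:projection:general}.

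With the sandwich in hand, the chain $\measurement{A} \in \projcond{\desirset} \iff \projof{A} \in \projlocalcond{\desirset} \iff \projof{A} \in \desirset \iff \measurement{A} \in \desirset$ closes the argument, the last equivalence coming from focusedness. I expect the main obstacle to be keeping the two orderings $\weakgeq$ and $\rngweakgeq$ aligned throughout the sandwich; fortunately, that alignment is exactly what \Cref{prop:ordering:and:projection:general} delivers, so once this is in place the remaining verifications are routine applications of the coherence axioms.
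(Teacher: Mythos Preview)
Your proposal is correct and follows essentially the same route as the paper: consistency of \(\assessment_{\desirset}\) via \Cref{prop:ordering:and:projection:general} and~\ref{axiom:desirs:consistency}, sufficiency from the automatic \(\projindifset\)-invariance of \(\projcond{\desirset}\), and necessity via \(\rng(\proj)\)-focusedness together with the same case split on the two pieces of the natural extension. The only cosmetic difference is that you first isolate the sandwich identity \(\projlocalcond{\desirset}=\desirset\cap\rng(\proj)\) and then read off the equivalence chain, whereas the paper argues the two inclusions \(\desirset\subseteq\projcond{\desirset}\) and \(\projcond{\desirset}\subseteq\desirset\) directly; the underlying computations are the same.
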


\begin{proof}
As \(\projlocalcond{\desirset}\) is the natural extension of \(\assessment_{\desirset}\), we only need to prove \labelcref{axiom:desirs:consistency}, or in this case \(\assessment_{\desirset}\cap\rng(\proj)_{\leq0}=\emptyset\), to establish the coherence of \(\projlocalcond{\desirset}\).
By combining \cref{eq:indifference:assessment:reduce} with \Cref{prop:ordering:and:projection:general}, we find that
\begin{align*}
\assessment_{\desirset}\cap\rng(\proj)_{\leq0}
&=\desirset\cap\rng(\proj)\cap\rng(\proj)_{\leq0}\\
&=\desirset\cap\rng(\proj)\cap\group[\big]{\negsemidefmeasurements\cap\rng(\proj)}\\
&=\underset{=\emptyset\textrm{ by \labelcref{axiom:desirs:consistency}}}{\underbrace{\group{\desirset\cap\negsemidefmeasurements}}}\cap\rng(\proj)=\emptyset.
\end{align*}
For the proof of the second statement, we begin with necessity.
Assume that \(\desirset\) is \(\projindifset\)-compatible.
By \Cref{prop:indifference:subspace:general}, \(\desirset\) is \(\rng(\proj)\)-focused, meaning that \(\measurement{A}\in\desirset\) if and only if \(\projof{A}=\extend(\reduce(\measurement{A}))\in\desirset\), for all \(\measurement{A}\in\measurements\).
So for any \(\measurement{A}\in\desirset\), it follows that \(\reduce(\measurement{A})\in\projlocalcond{\desirset}\), so \(\measurement{A}\in\projcond{\desirset}\).
Hence, \(\desirset\subseteq\projcond{\desirset}\).
Conversely, consider any \(\measurement{A}\in\projcond{\desirset}\), then \(\reduce(\measurement{A})\in\projlocalcond{\desirset}\), so it follows from \cref{eq:indifference:representation:reduce} that there are two possibilities.
The first is that \(\reduce(\measurement{A})=\measurement{B}+\measurement{C}\), with \(\measurement{B},\measurement{C}\in\rng(\proj)\), \(\measurement{B}=\extendof{B}\in\desirset\) and \(\measurement{C}\in\rngpossemidefmeasurements\), and therefore, by \Cref{prop:ordering:and:projection:general}, \(\extendof{C}=\measurement{C}\geq0\).
But then \(\projof{A}=\extend(\reduce(\measurement{A}))=\extend(\measurement{B}+\measurement{C})=\measurement{B}+\measurement{C}\geq\measurement{B}\in\desirset\), whence \(\projof{A}\in\desirset\) [use \labelcref{axiom:desirs:monotonicity}].
This implies that \(\measurement{A}\in\desirset\).
The second possibility is that \(\reduce(\measurement{A})\in\rngposdefmeasurements\), so there's some \(\measurement{D}\in\projindifset\) such that \(\projof{A}+\measurement{D}>0\), and therefore, by \labelcref{axiom:desirs:background}, \(\projof{A}+\measurement{D}\in\desirset\).
Since also \(\measurement{A}-\projof{A}\in\projindifset\), and therefore \(\measurement{A}-\projof{A}-\measurement{D}\in\projindifset\), we find that \(\measurement{A}=\group{\projof{A}+\measurement{D}}+\group{\measurement{A}-\projof{A}-\measurement{D}}\in\desirset+\projindifset\subseteq\desirset\).
We see that, indeed, \(\projcond{\desirset}\subseteq\desirset\).

For sufficiency, observe that always \(\projcond{\desirset}+\projindifset\subseteq\projcond{\desirset}\), because \(\reduce(\measurement{A}+\measurement{B})=\reduce(\measurement{A})\) for all \(\measurement{A}\in\measurements\) and \(\measurement{B}\in\projindifset\).
\end{proof}
\noindent Any coherent and \(\projindifset\)-compatible set of desirable measurements on \(\hilbertspace\) can therefore be represented by a coherent set of desirable measurements on the typically lower-dimensional subspace \(\subspace\).
This result has a converse, which will be helpful for the argumentation further on.
\begin{proposition}\label{prop:original:representation}
Consider any coherent set of desirable measurements \(\desirset[o]\) in \(\rng(\proj)\), then
\begin{equation*}
\desirset
\coloneqq\invreduce(\desirset[o])
=\set{\measurement{A}\in\measurements\given\reduce(\measurement{A})\in\desirset[o]}
\end{equation*}
is \(\rng(\proj)\)-focused and coherent, and \(\desirset[o]=\projlocalcond{\desirset}\).
\end{proposition}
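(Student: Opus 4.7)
The plan is to verify the three assertions in order: (a) $\desirset$ is $\rng(\proj)$-focused, (b) $\desirset$ is coherent, and (c) $\projlocalcond{\desirset}=\desirset[o]$. Throughout I would exploit the two identities $\extend\circ\reduce=\proj$ on $\measurements$ and $\reduce\circ\extend=\mathrm{id}_{\rng(\proj)}$ already recorded in the paper, together with \Cref{prop:ordering:and:projection:general}, to move between the ambient orderings on $\measurements$ and their restrictions to $\rng(\proj)$.

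For (a), I would use idempotence of $\proj$: $\reduce(\projof{A})=\reduce(\extend(\reduceof{A}))=\reduceof{A}$, so the conditions $\measurement{A}\in\desirset$ and $\projof{A}\in\desirset$ are both equivalent to $\reduceof{A}\in\desirset[o]$, yielding \eqref{eq:desirability:focused:general}. For (b), axioms \labelcref{axiom:desirs:additivity,axiom:desirs:scaling} drop out of the linearity of $\reduce$ and the corresponding axioms for $\desirset[o]$. For \labelcref{axiom:desirs:consistency} I would argue contrapositively: any $\measurement{A}\weakleq\zero$ satisfies $-\measurement{A}\weakgeq\zero$, so \eqref{eq:ordering:and:projection:general} gives $\projof{A}\weakleq\zero$; since $\projof{A}\in\rng(\proj)$, \Cref{prop:ordering:and:projection:general} (applied to $-\projof{A}$) places $\reduceof{A}$ in $\rng(\proj)_{\leq 0}$, so $\reduceof{A}\notin\desirset[o]$ and hence $\measurement{A}\notin\desirset$. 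For \labelcref{axiom:desirs:background} I would invoke the footnote identity $\rngposdefmeasurements=\proj\group{\posdefmeasurements}$: any $\measurement{A}\stronggt\zero$ satisfies $\reduceof{A}\in\rngposdefmeasurements\subseteq\desirset[o]$. For \labelcref{axiom:desirs:monotonicity}, $\measurement{B}\weakgeq\measurement{A}$ gives $\projof{B}\weakgeq\projof{A}$ by \eqref{eq:ordering:and:projection:general}, which through \Cref{prop:ordering:and:projection:general} translates to $\reduceof{B}\rngweakgeq\reduceof{A}$, and monotonicity of $\desirset[o]$ finishes the argument.

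For (c), I would observe that for any $\measurement{C}\in\rng(\proj)$ we have $\measurement{C}=\reduce(\extendof{C})$, so $\measurement{C}\in\desirset$ iff $\measurement{C}\in\desirset[o]$; hence $\assessment_{\desirset}=\desirset\cap\rng(\proj)=\desirset[o]$. Because $\desirset[o]$ is already coherent in $\rng(\proj)$, natural extension within that subspace leaves it unchanged, so $\projlocalcond{\desirset}=\natext(\desirset[o])=\desirset[o]$.

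The main obstacle is the careful bookkeeping around the two layers of orderings---matching up $\weakgeq$ on $\measurements$ with $\rngweakgeq$ on $\rng(\proj)$ via \Cref{prop:ordering:and:projection:general} and the footnote identity for $\rngposdefmeasurements$---when verifying \labelcref{axiom:desirs:consistency,axiom:desirs:background,axiom:desirs:monotonicity}. Once those translations are in place, everything reduces to a formal application of the coherence axioms for $\desirset[o]$.
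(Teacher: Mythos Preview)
Your proposal is correct and follows essentially the same approach as the paper's proof. The only minor variation is that you verify the $\rng(\proj)$-focused condition directly via $\reduce\circ\proj=\reduce$, whereas the paper verifies the equivalent $\projindifset$-compatibility condition and (implicitly) appeals to \Cref{prop:indifference:subspace:general}; and for \labelcref{axiom:desirs:background} you cite the footnote identity $\rngposdefmeasurements=\proj(\posdefmeasurements)$ while the paper unpacks the definition of $\rngstronggt$ explicitly using $\measurement{A}-\projof{A}\in\projindifset$---but these are cosmetic differences, not substantive ones.
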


\begin{proof}
The set \(\desirset\) is clearly \(\projindifset\)-compatible, since for any \(\measurement{A}\in\desirset\) and \(\measurement{B}\in\projindifset\),
\begin{align*}
\reduce(\measurement{A}+\measurement{B})
&=\reduce(\measurement{A})+\reduce(\extend(\reduce(\measurement{B})))\\
&=\reduce(\measurement{A}+\proj(\measurement{B}))
=\reduce(\measurement{A})\in\desirset[o],
\end{align*}
and therefore \(\measurement{A}+\measurement{B}\in\desirset\).
For coherence, \labelcref{axiom:desirs:additivity,axiom:desirs:scaling} follow from the linearity of \(\reduce\).
For \labelcref{axiom:desirs:background}, consider any \(\measurement{A}\stronggt\zero\).
Since \(\measurement{A}-\projof{A}\in\projindifset\) and \(\measurement{A}=\projof{A}+\group{\measurement{A}-\projof{A}}\), it follows that \(\reduceof{A}\rngstronggt\zero\), and therefore also \(\reduceof{A}\in\desirset[o]\), by \labelcref{axiom:desirs:background}.
Hence, \(\measurement{A}\in\desirset\).
For \labelcref{axiom:desirs:monotonicity}, consider any \(\measurement{A}\weakgeq\zero\), then also \(\projof{A}\weakgeq\zero\) by the requirement~\eqref{eq:ordering:and:projection:general}.
Hence, \(\reduceof{A}\rngweakgeq\zero\), which combined with the linearity of \(\reduce\) ensures monotonicity.
Finally, for \labelcref{axiom:desirs:consistency}, assume that \(\measurement{A}\weakleq\zero\), then also \(\projof{A}\weakleq\zero\) by the requirement~\eqref{eq:ordering:and:projection:general}.
It follows that \(\reduceof{A}\rngweakleq\zero\), and therefore \(\reduceof{A}\notin\desirset[o]\), by \labelcref{axiom:desirs:consistency}.
Hence, indeed, \(\measurement{A}\notin\desirset\).

For the second statement, consider any \(\measurement{C}\in\rng(\proj)\), then \(\measurement{C}\in\assessment_{\desirset}\) if and only if \(\extend(\measurement{C})\in\desirset\), which is in turn equivalent to \(\measurement{C}=\reduce(\extend(\measurement{C}))\in\desirset[o]\).
Therefore, \(\desirset[o]=\assessment_{\desirset}\), which implies that, since \(\desirset[o]\) is coherent, \(\desirset[o]=\natext(\desirset[o])=\natext(\assessment_{\desirset})=\projlocalcond{\desirset}\).
\end{proof}

\noindent We're of course mainly interested in what these expressions become when we start out with a set of desirable measurements \(\desirset\) that's coherent.

\begin{proposition}\label{prop:conditioning:general}
Let \(\desirset\) be any coherent set of desirable measurements on \(\hilbertspace\), then
\begin{align}
\projlocalcond{\desirset}
&=\rngposdefmeasurements\cup\group[\big]{\desirset\cap\rng(\proj)}
\label{eq:conditioning:coherent:general:local}\\
\projcond{\desirset}
&=\projnewbackground\cup\invproj\group{\desirset}
\label{eq:conditioning:coherent:general:first}\\
&=\projnewbackground\cup\set{\measurement{A}\in\measurements\given\projof{A}\in\desirset}
\label{eq:conditioning:coherent:general:second}
\end{align}
with \(\projnewbackground\coloneqq\set{\measurement{A}\in\measurements\given\group{\exists\measurement{D}\in\projindifset}\,\projof{A}\stronggt\measurement{D}}\).
\end{proposition}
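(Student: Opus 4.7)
The plan is to first establish~\cref{eq:conditioning:coherent:general:local}, and then derive~\cref{eq:conditioning:coherent:general:first,eq:conditioning:coherent:general:second} by unfolding the definition~\cref{eq:indifference:representation:extend} of $\projcond{\desirset}$.

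For~\cref{eq:conditioning:coherent:general:local}, I would start from~\cref{eq:indifference:representation:reduce} and reduce the problem to showing that $\posi(\assessment_{\desirset}) + \rngpossemidefmeasurements = \desirset \cap \rng(\proj) = \assessment_{\desirset}$. The inclusion $\supseteq$ is immediate, since $\zero \in \rngpossemidefmeasurements$. For $\subseteq$, the key ingredients are the coherence of $\desirset$ together with \Cref{prop:ordering:and:projection:general}: because $\rngpossemidefmeasurements = \possemidefmeasurements \cap \rng(\proj)$, any element of the left-hand side arises as a positive combination of elements of $\desirset \cap \rng(\proj)$ plus something positive semidefinite in $\rng(\proj)$, and therefore lies in $\desirset$ by \labelcref{axiom:desirs:monotonicity,axiom:desirs:additivity,axiom:desirs:scaling} and in $\rng(\proj)$ since the latter is a linear subspace. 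Substituting this identity into~\cref{eq:indifference:representation:reduce} immediately yields~\cref{eq:conditioning:coherent:general:local}.

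For the two remaining equalities, I would unfold definition~\cref{eq:indifference:representation:extend}: an element $\measurement{A} \in \measurements$ lies in $\projcond{\desirset}$ if and only if $\reduce(\measurement{A}) \in \projlocalcond{\desirset}$. Inserting~\cref{eq:conditioning:coherent:general:local} splits this condition into two disjunctive cases. In the first, $\reduce(\measurement{A}) \in \rngposdefmeasurements$, which by the very definition of $\rngstronggt$ is equivalent to the existence of some $\measurement{D} \in \projindifset$ with $\projof{A} \stronggt \measurement{D}$, i.e., to $\measurement{A} \in \projnewbackground$. In the second, $\reduce(\measurement{A}) \in \desirset \cap \rng(\proj)$, which (identifying $\reduce(\measurement{A})$ with $\extend(\reduce(\measurement{A})) = \projof{A}$) simply says that $\projof{A} \in \desirset$; this gives both $\measurement{A} \in \invproj(\desirset)$ and the equivalent set-builder form on the right-hand side of~\cref{eq:conditioning:coherent:general:second}. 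Taking the union of the two cases yields~\cref{eq:conditioning:coherent:general:first,eq:conditioning:coherent:general:second} simultaneously.

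The main subtlety lies in the first step, where one must carefully track how the orderings $\rngstronggt$ and $\rngweakgeq$ on $\rng(\proj)$ relate to $\stronggt$ and $\weakgeq$ on $\measurements$; this is precisely where the additional requirement~\cref{eq:ordering:and:projection:general}, as cashed in by \Cref{prop:ordering:and:projection:general}, is indispensable. Once that identification is in place, both remaining steps reduce to straightforward unfolding of definitions.
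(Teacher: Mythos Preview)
Your proposal is correct and follows essentially the same route as the paper's proof: first establish~\cref{eq:conditioning:coherent:general:local} by showing that \(\posi(\assessment_{\desirset})+\rngpossemidefmeasurements=\desirset\cap\rng(\proj)\) via \Cref{prop:ordering:and:projection:general} and the coherence axioms, then read off~\cref{eq:conditioning:coherent:general:first,eq:conditioning:coherent:general:second} by unfolding~\cref{eq:indifference:representation:extend}. The only cosmetic difference is that the paper first proves \(\group{\desirset\cap\rng(\proj)}+\rngpossemidefmeasurements=\desirset\cap\rng(\proj)\) and then applies \(\posi\), whereas you invoke \labelcref{axiom:desirs:additivity,axiom:desirs:scaling} directly to absorb the \(\posi\) in one step; both amount to the same argument.
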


\begin{proof}
It's clear from \cref{eq:indifference:representation:extend} that \cref{eq:conditioning:coherent:general:first,eq:conditioning:coherent:general:second} follow directly from \cref{eq:conditioning:coherent:general:local}, so we concentrate on the latter.
First, combine the result in \cref{eq:indifference:assessment:reduce} with \cref{prop:ordering:and:projection:general} to find that
\begin{multline*}
\group[\big]{\desirset\cap\rng(\proj)}+\rngpossemidefmeasurements\\
\begin{aligned}
&=\group[\big]{\desirset\cap\rng(\proj)}+\group[\big]{\possemidefmeasurements\cap\rng(\proj)}\\
&\subseteq\group{\desirset+\possemidefmeasurements}\cap\rng(\proj)\\
&\subseteq\desirset\cap\rng(\proj),
\end{aligned}
\end{multline*}
where the first inclusion follows because \(\rng(\proj)\) is a linear space, and the second inclusion follows from \labelcref{axiom:desirs:monotonicity}.
The converse inequality holds because \(\zero\rngweakgeq\zero\), so we find that \(\group{\desirset\cap\rng(\proj)}+\rngpossemidefmeasurements=\desirset\cap\rng(\proj)\), and therefore also that \(\posi\group{\group{\desirset\cap\rng(\proj)}+\rngpossemidefmeasurements}=\desirset\cap\rng(\proj)\).
Now take this equality back to \cref{eq:indifference:representation:reduce}, with \(\assessment_{\desirset}=\desirset\cap\rng(\proj)\).
\end{proof}

\subsection{Updating Your beliefs}
Above, we took some pains to investigate what a coherent set of desirable measurements \(\desirset\) must look like to reflect, besides perhaps other beliefs, an indifference assessment \(\projindifset\) associated with a projection operator \(\proj\) satisfying requirement~\eqref{eq:ordering:and:projection:general}: it must be \(\projindifset\)-compatible, or equivalently, \(\rng(\proj)\)-focused.
We then also investigated how it can be represented in the lower-dimensional subspace \(\rng(\proj)\), where all the relevant action then is.

We now turn to the more involved question of how to combine Your initial beliefs about the system's state \(\uket\), as captured in a coherent set of desirable measurements \(\desirset\), with a new indifference assessment of this type.

Because Your updated beliefs must reflect the indifference assessments, they must be \(\projindifset\)-compatible, which effectively makes sure that they can be represented by some coherent set of desirable measurements in the linear subspace \(\rng(\proj)\) of \(\measurements\).
Its compatibility with the initial beliefs ---as captured in the set \(\desirset\) of desirable measurements on~\(\hilbertspace\) --- is expressed by the fact that \emph{we only take those measurements in \(\rng(\measurements)\) that were initially desirable}, so Your \emph{updated set of desirable measurements} in the reduced space \(\rng(\proj)\) is given by
\begin{align}
\projlocalcond{\desirset}
&=\natext(\desirset\cap\rng(\proj))\notag\\
&=\rngposdefmeasurements\cup\group[\big]{\desirset\cap\rng(\proj)};
\label{eq:conditioning:coherent:general:local:simpler}
\end{align}
see \Cref{prop:conditioning:general} for the second equality.
\(\projlocalcond{\desirset}\) is coherent in \(\rng(\proj)\) by \Cref{prop:local:representation:general}.

However, as we need the entire state space, and therefore need to work with all measurements, to deal with any possible further evolution of the quantum system, we can use the \(\rng(\proj)\)-focused set of desirable measurements \(\projcond{\desirset}\) \emph{in \(\measurements\)} that corresponds to the coherent set \(\projlocalcond{\desirset}\) of desirable measurements in \(\rng(\proj)\):
\begin{align}
\projcond{\desirset}
&=\invreduce(\projlocalcond{\desirset})\notag\\
&=\projnewbackground\cup\set{\measurement{A}\in\measurements\given\projof{A}\in\desirset};
\label{eq:conditioning:coherent:general:simpler}
\end{align}
see \Cref{prop:conditioning:general} for the second equality.
By \Cref{prop:original:representation}, this set is coherent in \(\measurements\), so we've achieved our goal: we've turned Your initial coherent set of desirable measurements \(\desirset\) into a new coherent and \(\projindifset\)-compatible set of desirable measurements \(\projcond{\desirset}\).

\subsection{Duality}\label{sec:duality}
When Your beliefs are described by a coherent prevision \(\linprev\) on \(\measurements\),
\begin{equation*}
\desirset[\linprev]
\coloneqq\set{\measurement{A}\in\measurements\given\linprev(\measurement{A})>0}
\end{equation*}
is a corresponding coherent set of desirable measurements on \(\hilbertspace\),
because \(\linprev=\lprice[{\desirset[\linprev]}]=\uprice[{\desirset[\linprev]}]\).
Updating \(\desirset[\linprev]\) with the indifference assessment~\(\projindifset\) leads to
\begin{equation*}
\projcond{\desirset[\linprev]}
=\projnewbackground\cup\set{\measurement{A}\in\measurements\given\linprev(\projof{A})>0}.
\end{equation*}
The linearity of the projection \(\proj\) guarantees that this updated coherent set of desirable measurements corresponds to the \emph{updated coherent prevision} \(\linprev(\bolleke\vert\rng(\proj))\coloneqq\lprice[{\projcond{\desirset[\linprev]}}]=\uprice[{\projcond{\desirset[\linprev]}}]\) on \(\measurements\), with
\begin{align}
\linprev\group{\measurement{A}\vert\rng(\proj)}
&=\inf\set{\con\in\reals\given\con\identity-\measurement{A}\in\projcond{\desirset}}
\notag\\
&=\frac{\linprev\group{\projof{A}}}{\linprev\group{\projof{I}}}
\text{ for all~\(\measurement{A}\in\measurements\)}.
\label{eq:prevision:update}
\end{align}

\section{Back to conditioning on a subspace}\label{sec:back:to:conditioning}
We're now ready to apply these general results to the specific case we'd begun studying in \Cref{sec:conditioning}, where the new information that the system's state \(\uket\) belongs to some subspace \(\subspace\) is represented by the linear space \(\indifset[\subspace]\) of indifferent measurements that's the kernel of the projection operator \(\proj[\subspace]=\projection[\subspace]\bolleke\projection[\subspace]\) on \(\hilbertspace\), so \(\indifset[\subspace]=\projindifset[\subspace]\).

Before going through the details, it's useful and interesting to point out that the range \(\rng(\proj[\subspace])\) of the projection \(\proj[\subspace]\) is linearly isomorphic to the linear space \(\measurements(\subspace)\) of all the measurements on the subspace \(\subspace\) itself, through the linear isomorphism \(\rng(\proj[\subspace])\to\measurements(\subspace)\) mapping any \(\measurement{A}\) in \(\rng(\proj[\subspace])\) to the measurement \(\measurement{C}\) in \(\measurements(\subspace)\) uniquely determined by \(\fbra\measurement{C}\fket=\fbra\measurement{A}\fket\) for all \(\fket\in\subspace\).
The inverse linear isomorphism \(\measurements(\subspace)\to\rng(\proj[\subspace])\) is given by \(\measurement{C}\mapsto\projection[\subspace]\measurement{C}\projection[\subspace]\), with some abuse of notation.

Also, if \(\measurement{A}=\projof{A}\rngweakgeq[\subspace]\zero\), then there's some \(\measurement{D}\in\indifset[\subspace]\) such that \(\measurement{A}+\measurement{D}\weakgeq\zero\), and this leads to the following chain of implications
\begin{align*}
\measurement{A}+\measurement{D}\weakgeq\zero
&\then\group{\forall\fket\in\subspace}\,\fbra\measurement{A}+\measurement{D}\fket\geq0\\
&\then\group{\forall\fket\in\subspace}\fbra\projection[\subspace]\group{\measurement{A}+\measurement{D}}\projection[\subspace]\fket\geq0\\
&\then\group{\forall\fket\in\subspace}\fbra\projection[\subspace]\measurement{A}\projection[\subspace]\fket\geq0\\
&\then\group{\forall\fket\in\subspace}\fbra\measurement{C}\fket\geq0,
\end{align*}
implying that the isomorphic copy \(\measurement{C}\) of \(\measurement{A}\) is positive semidefinite, so \(\measurement{C}\geq0\) in \(\measurements(\subspace)\).
Since the converse implication holds trivially, we see that the ordering \(\rngweakgeq[\subspace]\) on \(\rng(\proj[\subspace])\) is an isomorphic copy of the ordering \(\weakgeq\) associated with positive semidefiniteness on \(\measurements(\subspace)\); and similarly, the ordering \(\rngstronggt[\subspace]\) on \(\rng(\proj[\subspace])\) is an isomorphic copy of the positive definiteness ordering \(\stronggt\) on \(\measurements(\subspace)\).

We can now (re)turn to the question of how to combine Your initial beliefs about the system's state \(\uket\), as captured in a coherent set of desirable measurements \(\desirset\) in \(\measurements\), with the new knowledge that the system's state \(\uket\) belongs to some subspace \(\subspace\).

Applying the results of the previous section to this particular case, taking into account the above-mentioned isomorphisms and the correspondences between orderings, we find that since Your updated beliefs must incorporate the indifference assessments, they must be \(\indifset[\subspace]\)-compatible, which effectively makes sure that they can be represented by some coherent set of desirable measurements in the linear space \(\measurements(\subspace)\) of all measurements on the subspace \(\subspace\).
Their compatibility with Your initial beliefs, which are captured in the set \(\desirset\) of desirable measurements in \(\measurements\), is expressed by the fact that their extension \(\projection[\subspace]\measurement{C}\projection[\subspace]\) to the original space \(\hilbertspace\) was initially desirable.
In other words, \cref{eq:conditioning:coherent:general:local:simpler} now turns into, with obvious notations,
\begin{align}
\localcond{\desirset}
&=\natext\group{\set{\measurement{C}\in\measurements(\subspace)\given\projection[\subspace]\measurement{C}\projection[\subspace]\in\desirset}}\notag\\
&=\posdefmeasurements(\subspace)\cup\set{\measurement{C}\in\measurements(\subspace)\given\projection[\subspace]\measurement{C}\projection[\subspace]\in\desirset}
\label{eq:conditioning:coherent:local:subspace}
\end{align}
as Your updated set of desirable measurements in \(\measurements(\subspace)\); it's coherent by an application of \Cref{prop:local:representation:general}.

However, any subsequent evolution of the quantum system will typically have to be described in the whole state space \(\statespace\), as its dynamics may perfectly well make its state move outside the subspace \(\subspace\) again.
\Cref{prop:original:representation} now tells us, {\itshape mutatis mutandis}, that we may find this description in \(\statespace\) by considering the \(\subspace\)-focused set of desirable measurements in \(\measurements\) given by the following appropriately transformed version of \cref{eq:conditioning:coherent:general:simpler}:
\begin{equation}\label{eq:conditioning:coherent:subspace}
\cond{\desirset}
=\newbackground\cup\set{\measurement{A}\in\measurements\given\projection[\subspace]\measurement{A}\projection[\subspace]\in\desirset},
\end{equation}
where \(\newbackground\coloneqq\set{\measurement{A}\in\measurements\colon(\exists\measurement{D}\in\indifset[\subspace])\projection[\subspace]\measurement{A}\projection[\subspace]>\measurement{D}}\), or equivalently, \(\newbackground=\proj(\posdefmeasurements)+\indifset[\subspace]\).
\Cref{prop:original:representation} also ensures that this set is coherent, so we've achieved our goal of turning the initial coherent set of desirable measurements \(\desirset\) in \(\measurements\) into a new coherent and \(\indifset[\subspace]\)-compatible set of desirable measurements \(\cond[\subspace]{\desirset}\).

This rule for conditioning closely resembles the updating rule introduced in \cite{Benavoli_2016} for conditioning, but there are important differences.
First, our interpretation of a set of desirable measurements isn't contingent on a bookmaker preparing and then measuring the system in certain way.
Secondly, we don't restrict ourselves to the case where the conditioning event is a pure state or a collection of orthogonal pure states, but we allow for conditioning on any subspace.
Finally, our rule is slightly different as we impose more general coherence axioms than the ones used in \cite{Benavoli_2016}, which leave room for imposing indifference assessments and therefore give a natural interpretation to conditioning on a subspace.

On the other hand, \cref{eq:conditioning:coherent:subspace} is also a generalisation of \emph{Lüders' conditioning rule} in quantum mechanics.
To substantiate this claim, we specialise the results of \Cref{sec:duality} to the present case where \(\proj=\proj[\subspace]\).
We look in particular at \cref{eq:prevision:update}, which transforms into \(\linprev\group{\measurement{A}\vert\subspace}=\linprev\group{\projof[\subspace]{A}}/\linprev\group{\projof[\subspace]{I}}\) for all~\(\measurement{A}\in\measurements\).
We can now concentrate on the density operators \(\density\) and \(\density[\subspace]\) that correspond to the coherent previsions \(\linprev\) and \(\linprev(\cdot\vert\subspace)\) as determined in \Cref{thm:linear:prevision:representation}.
For all \(\measurement{A}\in\measurements\),
\begin{equation*}
\trace{\density[\subspace]\measurement{A}}
=\linprev\group{\measurement{A}\vert\subspace}
=\frac{\trace{\density\projection[\subspace]\measurement{A}\projection[\subspace]}}{\trace{\density\projection[\subspace]\projection[\subspace]}}
=\frac{\trace{\projection[\subspace]\density\projection[\subspace]\measurement{A}}}{\trace{\projection[\subspace]\density\projection[\subspace]}},
\end{equation*}
leading directly to the following expressions for the updated density operator:
\begin{equation*}
\density[\subspace]
=\frac{{\projection[\subspace]\density\projection[\subspace]}}{\trace{\projection[\subspace]\group\density\projection[\subspace]}}
=\frac{\proj[\subspace]\group{\density}}{\trace{\proj[\subspace]\group{\density}}},
\end{equation*}
which correspond to Lüders' approach to conditioning a density operator \cite{luders1950}.
To summarise, if decisions about measurements are determined by a density operator \(\density\), then after learning that \(\uket\in\subspace\), the resulting decisions about measurements are now determined by the density operator \(\density[\subspace]\), which is the Lüders conditionalisation of the original density operator \(\density\) on the subspace \(\subspace\).

\begin{runexample}
Consider a quantum system with two spin \(1\) particles, where the state space describing the spin is \(\statespace\coloneqq\statespace[3]\otimes\statespace[3]\), with \(\hilbertspace[3]\coloneqq\linspanof{\set{\ket{-1},\ket{0},\ket{1}}}\) and \(\spa\) denotes the linear span.
For simplicity, we'll use the notation \(\ket{\ell,k}\coloneqq\ket{\ell}\otimes\ket{k}\).

You know that the system's total spin is zero, so that \(\uket\in\subspace\), where \(\subspace\coloneqq\linspanof{\set{\ket{-1,1},\ket{0,0},\ket{1,-1}}}\).
The linear space of indifferent measurements corresponding to this knowledge is \(\indifset[\subspace]=\{\measurement{A}\in\measurements\colon\projection[\subspace]\measurement{A}\projection[\subspace]=\zero\}\), with
\begin{equation*}
\projection[\subspace]
=\ket{-1,1}\bra{-1,1}+\ket{0,0}\bra{0,0}+\ket{1,-1}\bra{1,-1}.
\end{equation*}
The indifferent measurements are those that only differ from \(\zero\) on \(\subspace\)'s orthogonal complement \(\subspace^\orth=\linspanof{\set{\ket{-1,-1},\ket{-1,0},\ket{0,-1},\ket{0,1},\ket{1,0},\ket{1,1}}}\).
This is completely analogous to case of classical probability, where the indifferent gambles are those that only differ from \(0\) on the complement of the conditioning event.

A coherent set of desirable measurements \(\desirset\) in \(\measurements\) expresses this knowledge if it's \(\indifset[\subspace]\)-compatible, and then it can be represented by the coherent set of desirable measurements \(\localcond{\desirset}\) on \(\subspace\), which is simpler, as it only involves measurements on the \(3\)-dimensional subspace \(\subspace\).
But, since further dynamical evolution of the system will typically take it outside this subspace, we still need to consider the set of desirable measurements \(\cond{\desirset}\) on the full space \(\hilbertspace\).

Suppose now that Your initial set of desirable measurements is \(\desirset=\set{\measurement{A}\in\measurements\given\trace{\measurement{A}\density}>0}\), with
\begin{equation*}
\density
\coloneqq\frac13(\ket{-1,1}-\ket{0,0}+\ket{1,1})(\bra{-1,1}-\bra{0,0}+\bra{1,1}).
\end{equation*}
The corresponding set of density operators is then
\begin{equation*}
\densities_{\desirset}
=\set{\density\in\densities\given\group{\forall\measurement{A}\in\desirset}\trace{\density\measurement{A}}>0}
=\set{\density}.
\end{equation*}
This density operator corresponds to the system state \(\uket\) being equal to \(\nicefrac1{\sqrt3}\group{\ket{-1,1}-\ket{0,0}+\ket{1,1}}\) with probability one.
If You then learn that the system's state \(\uket\) belongs to the subspace \(\subspace=\linspanof{\ket{-1,1},\ket{0,0},\ket{1,-1}}\), then the set of desirable measurements \(\cond{\desirset}\) is given by
\begin{align*}
\cond{\desirset}
&=\set{\measurement{A}\in\measurements\given\trace{\projection[\subspace]\measurement{A}\projection[\subspace]\density}>0}\cup\newbackground\\
&=\set{\measurement{A}\in\measurements\given\trace{\measurement{A}\density[\subspace]}>0}\cup\newbackground,
\end{align*}
with \(\density[\subspace]=\projection[\subspace]\density\projection[\subspace]/\trace{\projection[\subspace]\density\projection[\subspace]}\) given by Lüders' rule.
The corresponding set of density operators is the singleton \(\densities_{\cond{\desirset}}=\set{\density[\subspace]}\), where the updated density operator
\begin{equation*}
\density[\subspace]=\frac12\group{\ket{-1,1}-\ket{0,0}}\group{\bra{-1,1}-\bra{0,0}}.
\end{equation*}
corresponds to the system state \(\uket\) being equal to \(\nicefrac1{\sqrt2}\group{\ket{-1,1}-\ket{0,0}}\) with probability one.
\end{runexample}

\section{A more general type of conditioning}\label{sec:general:conditioning}
In \Cref{sec:conditioning,sec:back:to:conditioning}, we looked at how Your beliefs about the system's state \(\uket\) can be updated when You learn that \(\uket\) belongs to some subspace \(\subspace\), for instance after doing a measurement on the system, and observing the outcome.
We may wonder, however, how to deal with the situation where You know that a measurement \(\measurement{A}\) has been performed, but You don't learn its outcome, or You learn it only partially.
You'll then be sure that the state resides in one of the eigenspaces of the measurement, but You're uncertain about which.

We're therefore now going to look at the more general problem where You come to learn that the system's state \(\uket\) belongs to one of the \emph{mutually orthogonal} non-null subspaces \(\subspace_1,\dots,\subspace_r\subseteq\hilbertspace\), with \(r\geq2\), but where You're ignorant about which subspace it is.
We'll let \(\subunion\coloneqq\bigcup_{k=1}^r\subspace_k\).
Due to the orthogonality of the non-null subspaces \(\subspace_1,\dots,\subspace_r\), \(\projection[\subspace_k]\projection[\subspace_\ell]=\projection[\subspace_\ell]\projection[\subspace_k]=\delta_{k,\ell}\projection[\subspace_k]\) for all \(k,\ell\in\set{1,\dots,r}\), but \(\subunion\) is no subspace of \(\hilbertspace\).

Once again, we start by looking at how Your set of desirable measurements \(\desirset\) can incorporate --- or reflect --- this knowledge, besides any other beliefs You might have.
So let's suppose that You know that \(\uket\in\subunion\).

First, consider any \(\measurement{A}\in\measurements\) such that
\begin{multline*}
\group{\forall k\in\set{1,\dots,r}}\projection[\subspace_k]\measurement{A}\projection[\subspace_k]=\zero,\\
\text{ or equivalently, }
\smash[t]{\sum_{k=1}^{r}\projection[\subspace_k]\measurement{A}\projection[\subspace_k]}=\zero.
\end{multline*}
Your knowledge that \(\uket\in\subunion\) tells You that there's a unique \(\ell\in\set{1,\dots,r}\) such that \(\uket\in\subspace_{\ell}\) and therefore \(\projection[\subspace_\ell]\uket=\uket\).
Hence, \(\ubra\measurement{A}\uket=\ubra\projection[\subspace_\ell]\measurement{A}\projection[\subspace_\ell]\uket=0\), so You're sure that the reward resulting from the measurement \(\measurement{A}\) will be zero: \(\measurement{A}\) is \emph{equivalent to the null measurement} \(\zero\), or in other words, indifferent, to You.

Conversely, if \(\measurement{A}\) is such that \(\sum_{k=1}^{r}\projection[\subspace_k]\measurement{A}\projection[\subspace_k]\neq\zero\), consider the \emph{non-empty} set \(K\) of those \(k\in\set{1,\dots,r}\) for which \(\projection[\subspace_k]\measurement{A}\projection[\subspace_k]\neq\zero\).
Your knowledge that \(\uket\in\subunion\) again tells You that there's a unique \(\ell\in\set{1,\dots,r}\) such that \(\uket\in\subspace_{\ell}\) and therefore \(\projection[\subspace_\ell]\uket=\uket\), but \emph{it in no way allows You to exclude that} \(\ell\in K\); You therefore can't exclude that \(\ubra\measurement{A}\uket=\ubra\projection[\subspace_\ell]\measurement{A}\projection[\subspace_\ell]\uket\neq0\).
In this case, You can't infer from the knowledge that \(\uket\in\subunion\) that the reward resulting from the measurement \(\measurement{A}\) will be zero.

This leads us to set up an analogous argumentation to the one in \cref{sec:conditioning}, and define the linear map \(\proj[\subunion]\colon\measurements\to\measurements\colon\measurement{A}\mapsto\sum_{k=1}^{r}\projection[\subspace_k]\measurement{A}\projection[\subspace_k]\).
Due to the mutual orthogonality of the subspaces \(\subspace_k\), we find that \(\proj[\subunion]\circ\proj[\subunion]=\proj[\subunion]\), so \(\proj[\subunion]\) is a linear projection operator on the real linear measurement space \(\measurements\), whose kernel is the set of indifferent measurements \(\indifset[\subunion]\coloneqq\set{\measurement{A}\in\measurements\given\proj[\subunion](\measurement{A})=\zero}=\bigcap_{k=1}^r\indifset[\subspace_k]\) corresponding to Your newly acquired knowledge that \(\uket\in\subunion\).

Here too, we can argue that a converse holds, that is, if You're indifferent to \(\indifset[\subunion]\), then You must believe that \(\uket\in\subunion\).
Suppose that You're indifferent to the measurements in \(\indifset[\subunion]\), and that You nevertheless believe that it's possible that \(\uket=\fket\) with \(\fket\notin\subunion\).
Then clearly \(\proj[\subunion](\fket\fbra)=\sum_{k=1}^{r}\projection[\subspace_k]\fket\fbra\projection[\subspace_k]\lneq\fket\fbra\).
Consider the measurement \(\measurement{A}\coloneq\proj[\subunion](\fket\fbra)-\fket\fbra\in\indifset[\subunion]\), then \(\utilitypure{\measurement{A}}(\fket)=-\fbra\proj[\subunion](\fket\fbra)\fket-\fbra\fket\fbra\fket<0\).
You're therefore indifferent to, and consequently have fair price zero for, a loss You deem possible, without any possibility of gain; this is unreasonable.

We emphasise that this argument is also completely analogous to a similar one for classical probabilities, where You also look at the intersection of the sets of indifferent gambles when You know one of their corresponding events occurs.

Since it holds that \(\measurement{A}\weakgeq\zero\then\projof[\subunion]{A}\weakgeq0\), all the arguments and results about the interplay of coherence, indifference and updating made in \Cref{sec:abstract} can be applied to this special case as well: suitably instantiated versions of \Cref{prop:indifference:subspace:general,prop:ordering:and:projection:general,prop:local:representation:general,prop:original:representation} and \cref{eq:conditioning:coherent:general:local,eq:conditioning:coherent:general:second} hold.
In particular, we find for Your coherent \emph{updated set of desirable measurements} in the reduced space \(\rng(\proj[\subunion])\) that, with obvious notations,
\begin{equation}\label{eq:conditioning:coherent:subunion:local:simpler}
\localcond[\subunion]{\desirset}
=\rngposdefmeasurements[\subunion]\cup\group[\big]{\desirset\cap\rng(\proj[\subunion])},
\end{equation}
and for the coherent \(\rng(\proj[\subunion])\)-focused set of desirable measurements \(\cond[\subunion]{\desirset}\) that corresponds to Your updated set \(\localcond[\subunion]{\desirset}\) of desirable measurements in \(\rng(\proj[\subunion])\) that
\begin{equation}\label{eq:conditioning:coherent:subunion:simpler}
\projcond{\desirset}
=\newbackground[\subunion]\cup\set{\measurement{A}\in\measurements\given\projof[\subunion]{A}\in\desirset}.
\end{equation}
Also, \cref{eq:prevision:update} readily transforms into \(\linprev\group{\measurement{A}\vert\subunion}=\linprev\group{\projof[\subunion]{A}}/\linprev\group{\projof[\subunion]{I}}\) for all~\(\measurement{A}\in\measurements\), leading, after a few manipulations, similar to the ones in \Cref{sec:back:to:conditioning}, to the following updating formula for density operators:
\begin{equation*}
\density[\subunion]
=\frac{\sum_{k=1}^r\projection[\subspace_k]\density\projection[\subspace_k]}{\trace{\sum_{k=1}^r\projection[\subspace_k]\density\projection[\subspace_k]}}
=\frac{\proj[\subunion]\group{\density}}{\trace{\proj[\subunion]\group{\density}}}.
\end{equation*}
This expression is a generalisation of Lüders conditioning, as it combines the Law of Total Probability with Lüders' rule as established in \Cref{sec:back:to:conditioning}.
More specifically, starting from a density operator \(\density\), the standard theory of quantum mechanics dictates that the probability of measuring a certain outcome \(\eigval_k\) --- an eigenvalue with corresponding eigenspace \(\subspace_k\) --- of a measurement \(\measurement{A}\) is given by \(\trace{\proj[{\subspace_k}]\group{\density}}\).
Lüders' rule tells us that after learning that the system's state belongs to the subspace \(\subspace_k\), the updated density operator is given by \(\density[\subspace_k]=\proj[{\subspace_k}]\group{\density}/\trace{\proj[{\subspace_k}]\group{\density}}\).
By the Law of Total Probability, then,
\begin{align*}
\density[\subunion]
&=\sum_{\ell=1}^r\density[\subspace_\ell]\frac{\trace{\proj[{\subspace_\ell}]\group{\density}}}{\sum_{k=1}^r\trace{\proj[{\subspace_k}]\group{\density}}}
=\frac{\sum_{\ell=1}^r\proj[{\subspace_k}]\group{\density}}{\sum_{k=1}^r\trace{\proj[{\subspace_k}]\group{\density}}}\\
&=\frac{\proj[\subunion]\group{\density}}{\trace{\proj[\subunion]\group{\density}}}.
\end{align*}
This shows that our more general theory of conditioning in quantum mechanics is compatible with the commonly used probabilistic framework, and that arguments of marginal extension \cite{Walley,miranda2007:marginal:extension} --- the law of total probability or iterated expectations --- can be used to derive our more general conditioning rule at least when the previsions involved are precise.
Whether such marginal extension can also be used in the case of imprecise (lower) previsions is a matter for further research.

\begin{runexample}
In our running example, You've instead learnt that the system's state \(\uket\) belongs to exactly one of the subspaces \(\subspace_1=\linspanof{\ket{-1,1}}\), \(\subspace_2=\linspanof{\ket{0,0}}\) and \(\subspace_3=\linspanof{\ket{1,-1}}\), for instance, by measuring the spins of both the particles but only registering their sum.
With \(\subunion\coloneqq\subspace_1\cup\subspace_2\cup\subspace_3\), the set of desirable measurements \(\cond[\subunion]{\desirset}\) is given by
\begin{equation*}
\cond[\subunion]{\desirset}
=\newbackground\cup\set{\measurement{A}\in\measurements\given\trace{\projof[\subunion]{A}\density}>0}.
\end{equation*}
The corresponding set of density operators is again a singleton \(\densities_{\cond{\desirset}}
=\set{\density[\subspace]}\), with
\begin{equation*}
\density[\subspace]
=\frac12\ket{-1,1}\bra{-1,1}+\frac12\ket{0,0}\bra{0,0}.
\end{equation*}
This density operator corresponds to the uniform probability over the states \(\ket{-1,1}\) and \(\ket{0,0}\).
This corresponds to the intuition that by measuring the state \(\gket=\nicefrac1{\sqrt3}(\ket{-1,1}-\ket{0,0}+\ket{1,1})\) along its basis, the results must be either \(\ket{-1,1}\), \(\ket{0,0}\) or \(\ket{1,1}\) with equal probability, but because You know that the total spin is \(0\), You end up with a uniform probability over \(\ket{-1,1}\) and \(\ket{0,0}\).
\end{runexample}

\section{Conclusion}
The desirable measurement framework has the advantage that it allows us at the same time to justify using Born's rule in quantum mechanics, and to extend it to situations where Your beliefs lead to partial preferences.
Furthermore, we've shown that it's possible to derive and generalise Lüder's rule for conditioning, by exploiting the interplay between desirability and indifference.
We stress that, in spirit, this is completely in line with ideas about updating classical sets of desirable gambles in the existing literature \cite{Walley,walley2000}.

In further work, we envisage extending this framework to deal with more general positive operator valued measures, by introducing marginalisation and cylindrical extension.
This work should be helpful in dealing with conservative inference in quantum computing and control.
In particular, it's a matter of future interest whether updating can be effected through marginal extension \cite{miranda2007:marginal:extension} also when uncertainty about the quantum state is expressed through imprecise probability models, as that would allow us to perform conservative inferences about successive measurements efficiently.

\appendix
\section*{Additional author information}

\begin{acknowledgements}
Funding for Gert de Cooman's research is partly covered by Ghent University's pioneering non-competitive research funding initiative.

We'd like to express our deep appreciation to four anonymous reviewers for taking our work seriously, for pointing out typos as well as small oversights, and for amongst them making quite a few suggestions for its improvement and possible continuation.
And where we didn't agree with them, their comments provided us with a clear motivation to clarify our arguments.
\end{acknowledgements}

\begin{authorcontributions}
This paper illustrates an interplay of ideas coming out of two different endeavours: Keano's exploring imprecision in quantum mechanics, and Gert's efforts to condense his ideas about the foundations of probability theory into a book: at some point it's no longer clear, or relevant, who influenced whom in coming up with what.
What is clear, is that Keano came up with the idea for the topic, and took the initiative in writing it all down, followed by a thorough revision and a reasonable amount of streamlining and rewriting by Gert.
Even at this later stage, active discussion between the two authors was crucial in shaping the final manuscript.
\end{authorcontributions}

\printbibliography

\end{document}